\newcommand{\blind}{0}
\newcommand{\R}{\ensuremath{\mathbf{R}}}
\theoremstyle{plain}
\newtheorem{theorem}{Theorem}[section]
\theoremstyle{definition}
\newtheorem{assumption}[theorem]{Assumption}
\theoremstyle{remark}
\newtheorem{remark}[theorem]{Remark}
\begin{document}

\def\spacingset#1{\renewcommand{\baselinestretch}%
{#1}\small\normalsize} \spacingset{1}

%%%%%%%%%%%%%%%%%%%%%%%%%%%%%%%%%%%%%%%%%%%%%%%%%%%%%%%%%%%%%%%%%%%%%%%%%%%%%%

\if0\blind
{
	\title{\bf Nowcasting and aggregation: \\Why small Euro area countries matter}
	\author{Andrii Babii\\
		Department of Economics, University of North Carolina -- Chapel Hill\\
		and \\
		Luca Barbaglia \\
		European Commission, Joint Research Centre (JRC)\\
		and \\
		Eric Ghysels\\
		Department of Economics and Kenan-Flagler Business School, \\University of North Carolina -- Chapel Hill\\
		and \\
		Jonas Striaukas\\
		Department of Finance, Copenhagen Business School
	}
	\maketitle
} \fi

\if1\blind
{
	\title{\bf Nowcasting and aggregation: \\Why small Euro area countries matter}
\date{}
\maketitle
  \medskip
} \fi

\bigskip
\begin{abstract}
The paper studies the nowcasting of Euro area Gross Domestic Product (GDP) growth using mixed data sampling machine learning panel data regressions with both standard macro releases and daily news data.  
Using a panel of 19 Euro area countries, we investigate whether directly nowcasting the Euro area aggregate is better than weighted individual country nowcasts. 
Our results highlight the importance of the information from small- and medium-sized countries, particularly when including the COVID-19 pandemic period. 
The empirical analysis is supplemented by studying the so-called Big Four---France, Germany, Italy, and Spain---and the value added of news data when official statistics are lagging.
From a theoretical perspective, we formally show that the aggregation of individual components forecasted with pooled panel data regressions is superior to direct aggregate forecasting due to lower estimation error.
\end{abstract}

\noindent%
{\it Keywords:}  hierarchical nowcasting, high-dimensional panels, aggregation, mixed-frequency data, textual data
\vfill
\thispagestyle{empty}
\setcounter{page}{0}

\newpage
\spacingset{1.8} % DON'T change the spacing!

\section{Introduction}\label{sec:intro}

A much-researched example of nowcasting is that of US real GDP growth. Traditional methods rely on dynamic factor models which treat quarterly GDP growth as a latent process and use standard (monthly) macroeconomic data releases to obtain within-quarter estimates. There are several limitations of this approach. First, in the era of big data, it is challenging to expand dynamic factor models to include many high-frequency, non-traditional data increasingly used by macroeconomists to gauge the state of the economy. Second, going beyond the nowcasting of a single series to nowcasting the GDP growth for many countries simultaneously---or to nowcasting firms' earnings, which has many similarities with nowcasting GDP---is also challenging with the traditional models and methods. To that end, \cite{babii2022machine} developed machine learning mixed-data sampling (MIDAS) regressions for single series nowcasting using potentially large data sets, and \cite{babii2022machinepanel} extend these methods to machine learning panel data MIDAS regressions.%\footnote{Prior to \cite{babii2022machine}, \cite{marsilli2014variable} proposed a LASSO-augmented MIDAS estimator without developing supporting asymptotic theory. The approach of \cite{marsilli2014variable} requires solving a challenging high-dimensional, non-convex optimization problem. In contrast, \cite{babii2022machine} propose a sparse-group LASSO estimator with flexible dictionaries which leads to convex optimization and develop high-dimensional asymptotic theory for general linear models, including MIDAS regressions.\label{footnote:MLMIDAS}}
%Nowcasting the US Gross Domestic Product (GDP) involves single series machine learning MIDAS regressions, see \cite{babii2022machine} for such an example. 

\medskip

In this paper, we are interested in nowcasting Euro area-19 (EA-19) GDP growth (see, e.g., \citealp{marcellino2003macroeconomic}), which means that we aim to potentially nowcast GDP growth for multiple countries simultaneously. One can think of several ways to proceed, namely:
\begin{itemize}
	\item[(a)] nowcast aggregate EA-19 GDP growth using only aggregate Euro area data;
	\item[(b)] nowcast aggregate EA-19 GDP growth using all available individual country data;
	\item[(c)] nowcast countries separately and then aggregate to obtain EA-19 nowcast;
	\item[(d)] nowcast only the large constituents, and use those to nowcast EA-19 GDP growth.
\end{itemize}

Option (a) is similar to the case of US GDP growth, with typically around 30 monthly series used to produce nowcasts. In particular, we use a larger set of monthly and weekly standard data for the EA-19 compared to the series used for the US, as will be explained in Section \ref{sec:standardmacro}. Even for this case, \cite{babii2022machine} show that machine learning methods outperform traditional dynamic factor models using exactly the same data. Moving to option (b), the challenge of dealing with large data sets emerges. If we keep the same 30 series, but collect them for each individual country, we have potentially 30 $\times$ 19 = 570 predictors. Still, the target is a single series, i.e.\ aggregate EA-19 GDP growth. The more interesting options are (c) and (d), which are the novel contributions of the paper. In both cases, the combination of nowcasting and aggregation comes into play. The former involves nowcasting for all countries, regardless of their size and importance in the overall economic outlook of the EA. Option (d) has been entertained by  \cite{cascaldigarcia2021} among others, who propose a multi-country nowcasting model to simultaneously predict the GDP of the Euro area and its three largest countries---namely Germany, France, and Italy---using up to 16 predictor variables per country.

\medskip

Cases (c) and (d) pertain to nowcasting in a data-rich environment using high-dimensional mixed-frequency panel data models. This is a relatively new and unexplored research area. \cite{khalaf2020dynamic} study low-dimensional mixed-frequency panels, but do not study forecasting. 
\cite{fosten2019panel} study nowcasting with a mixed-frequency vector autoregression (VAR) but not in the data-rich environment.
\cite{babii2022machinepanel} introduce structured machine learning panel data regressions, sampled at various frequencies using the sparse-group LASSO (sg-LASSO) regularization. They derive the oracle inequalities and debiased inference for this method in the panel data setting with dependent fat-tailed data; see also \cite{babii2022machine} and \cite{babii2020inference} for the time-series setting. \cite{babii2022panelpe} apply the method to nowcast a large cross-section of earnings data of US firms. In this paper, we apply the same framework for nowcasting panels of Euro area GDP growth based on real-time data vintages for standard macro data and news series extracted from a large set of newspaper articles. Our theoretical contribution pertains to the comparison of various aggregation schemes in high-dimensional data settings.

\medskip

The question of aggregate versus disaggregate estimation and forecasting has a long history in econometrics; see \cite{pesaran1989econometric} and references therein. \cite{lutkepohl2006forecasting} and \cite{hendry2011combining} discuss the theoretical comparison when the true data generating process is a VARMA process. Empirically, \cite{marcellino2003macroeconomic} find that aggregating individual time-series forecasts of EU inflation and GDP is superior to aggregate forecasting. However, the previous literature has not studied the aggregation problem for high-dimensional machine learning regressions, which is the focus of our work and is novel in the literature. We formally show that the aggregation of individual components forecasted with pooled panel data regressions is superior to direct aggregate forecasting due to lower estimation error. Importantly, our theoretical comparison is performed under weak assumptions on the underlying DGP, allowing for heterogeneous coefficients as in \cite{pesaran1989econometric}, \cite{pesaran1995estimating}, and \cite{pesaran2024forecasting}. Under heterogeneous coefficients, there is a trade-off between the overfitting of individual time-series regressions estimating a large number of coefficients and the misspecification error of pooled panel data regressions. The pooled panel data regressions may outperform the individual time-series regressions when the latter overfit due to large estimation error. Finally, simulations reported in the paper support the new theoretical results.

\medskip

In the empirical application, we use standard macroeconomic series and non-standard textual series. In recent years, the use of newspaper data for nowcasting has been explored by various authors. See, for example, \cite{thorsrud2018words}, \cite{larsen2021news}, \cite{ellingsen2022news}, 
\cite{barbaglia2022forecastingUS}, and 
\cite{boss2025nowcasting}
among others. 
\cite{SCOTTI20161}, \cite{baker2016}, \cite{barbaglia2022forecastingEA}, and \cite{ashwin2021nowcasting} are recent examples that use newspaper data for economic analysis in multiple countries. 
Although the construction of country-specific indicators brings an additional level of complexity, the presumption is that the resulting indices provide an early signal of the local economic conditions that is more precise than a news-based general indicator.
%In a data-rich environment this means potentially hundreds/thousands of series might be of interest to consider.  
%as well as non-traditional data such as for example payment systems information or GPS tracking data. Studies for Canada (\cite{galbraith2018nowcasting}),  Denmark
%(\cite{carlsen2010dankort}), India (\cite{raju2019nowcasting}), Italy (\cite{aprigliano2019using}), Portugal (\cite{duarte2017mixed}),  and the United States (\cite{barnett2016nowcasting}) find that payment transactions can help with nowcasting and with forecasting GDP and private consumption in the short term. Other related applications are \cite{moriwaki2019nowcasting} nowcasting unemployment rates with smartphone GPS data, among others.  
%\cite{barbaglia2022forecastingUS} provide a detailed study on nowcasting performance of the US GDP growth using text-based news series, while \cite{barbaglia2022forecastingEA} and \cite{ashwin2021nowcasting} extend the analysis for the largest economies in Europe.
We follow \cite{barbaglia2022forecastingEA}, who propose country-specific text-based indicators for five European countries using local news translated to English. In this paper, we extend their analysis to all EA-19 countries and show how the inclusion of news-based indices about smaller countries can improve the nowcasting performance.

\medskip

We use the MIDAS sparse-group LASSO regression approach of \cite{babii2022machine}, suitable for large-dimensional data environments, to test whether adding different levels of heterogeneity to panel models improves the quality of nowcasts. 
Moreover, we test whether specific data sources are informative when used in nowcasting settings. In particular, we look at whether news data can improve nowcasts over more traditionally used macro and financial series. 
%In particular, we build country-specific news-based indicator for all nineteen countries in the  EA-19 and show their usefulness in nowcasting.
%In addition, some of the macro series are not timely as they become available with several months delay. 
%For example, for a particular month, unemployment rate is available at the middle of next month. 
%We thus test whether excluding such series and reducing the estimation noise further improves the nowcasts. 
Lastly, we apply several weighting schemes to compute the Euro area aggregate nowcast based on country-level GDP nowcasts. 
We show that this strategy leads to more accurate nowcasts irrespective of the weighting scheme, suggesting that information in all European countries contributes to overall predictions. 
In addition, we test whether smaller panels that include only the Big Four countries---France, Germany, Italy, and Spain---suffice to compute the Euro area nowcasts. 
Our results suggest that smaller countries indeed matter and models that incorporate all EA-19 produce higher quality nowcasts.  
The efficient estimation of panel models using large data sets benefits from the timely inclusion of information about the current state of the economy from smaller countries. 
%Moreover, our analysis points in favour of the inclusion of country-specific news-based indicators, which are particularly useful during the COVID-19 breakthrough. 

\medskip

In sum, the paper makes multiple contributions. First, we extend the existing literature on forecasting/nowcasting and aggregation to high-dimensional data settings. Second, we study different aggregation schemes and find that a projection method appears to work best. Third, we showcase the use of non-traditional data in nowcasting, and more specifically news-related data, a topic that has received much attention recently. Fourth, we highlight the role played by the small Euro area countries due to their timely data releases and interconnectedness of their economies with the larger countries. Finally, our sample includes the pandemic, and we analyze its impact on nowcasting and model performance. The latter leads us to investigate nowcast combination schemes across different models.

\medskip

The paper is organized as follows. Section \ref{sec:theory} presents the theoretical comparison of various aggregation schemes in high-dimensional data settings. Section \ref{sec:simul} presents the results of a Monte Carlo simulation study. Section \ref{sec:eunowcasting} considers the empirical problem of nowcasting EU output and aggregation. Finally, Section \ref{sec:conclusion} concludes. The Appendix contains additional empirical results and details on the construction of the news-based indicators.

%We also use $|S|$ to denote the cardinality of a set $S$. For $b\in\R^p$, its $\ell_q$ norm is denoted as $|b|_q$ = $(\sum_{j\in[p]}|b_j|^q)^{1/q}$ if $q\in[1,\infty)$ and $|b|_\infty = \max_{j\in[p]}|b_j|$ if $q=\infty$. For a  group structure $\mathcal{G}$, the $\ell_{2,1}$ group norm of $b\in\R^p$ is defined as $\|b\|_{2,1}=\sum_{G\in\mathcal{G}}|b_G|_2$. For $\mathbf{u},\mathbf{v}\in\R^J$, the empirical inner product is defined as $\langle \mathbf{u},\mathbf{v}\rangle_J = J^{-1}\sum_{j=1}^J u_jv_j$ with the induced empirical norm $\|.\|_J^2=\langle.,.\rangle_J=|.|_2^2/J$. For a symmetric $p\times p$ matrix $A$, let $\mathrm{vech}(A)\in\R^{p(p+1)/2}$ be its vectorization consisting of the lower triangular and the diagonal elements. Let $A_G$ be a sub-matrix consisting of rows of $A$ corresponding to indices in $G\subset[p]$. If $G=\{j\}$ for some $j\in[p]$, then we simply write $A_G=A_j$. Let $\|A\|_\infty=\max_{j\in[p]}|A_j|$ be the matrix norm. For $a,b\in\R$, we put $a\vee b = \max\{a,b\}$ and $a\wedge b = \min\{a,b\}$. Lastly, we write $a_n\lesssim b_n$ if there exists a (sufficiently large) absolute constant $C$ such that $a_n\leq C b_n$ for all $n\geq 1$ and $a_n\sim b_n$ if $a_n\lesssim b_n$ and $b_n\lesssim a_n$.

\section{Forecasting/Nowcasting and Aggregation in High-Dimensional Data Settings \label{sec:theory}}

The theoretical developments in this section use a standard linear panel data model setting used for the purpose of forecasting. Hence, we simplify the notation by ignoring the mixed frequency nature of the data and treat nowcasting as a special case of a generic forecasting problem. More specifically, we work with panel data $(y_{i,t},x_{i,t})\in\R\times\R^p$, where $t=1,2,\dots,T$ denotes time and $i=1,2,\dots,N$ is the cross-section. The vector of covariates $x_{i,t}$ can include $1$ for the intercept, the lags of $y_{i,t}$, and the (higher-frequency observations of) covariates. The objective is to predict the aggregate outcome:
\begin{equation*}
	Y_{t+1} := \omega^\top\mathbf{y}_{t+1} = \sum_{i=1}^N\omega_iy_{i,t+1}
\end{equation*}
using the covariates $\mathbf{x}_{t}\in\R^{N\times p}$, where $\mathbf{y}_t = (y_{1,t},\dots,y_{N,t})^\top$ and $\omega\in\R^N$ is a vector of aggregation weights. The simplest aggregation scheme is averaging, in which case the weights are $\omega_i=1/N$ for all units $i=1,\dots,N$. 
For a quadratic loss function, the optimal forecasts are given by conditional expectations, hence: $\mathbb{E}[Y_{t+1}|\mathcal{F}_t]$ = $\sum_{i=1}^N\omega_i\mathbb{E}[y_{i,t+1}|\mathcal{F}_t],$
where $\mathcal{F}_t$ is the information available at time $t$. Formally, $\mathcal{F}_t$ is the $\sigma$-field generated by $(y_{i,s},x_{i,s})$ for $i=1,\dots,N$ and $s\leq t$. This allows the disaggregate series $x_{i,t},i=1,\dots,N$ to be used when forecasting the aggregate $Y_{t+1}$. If one had perfect knowledge of the data-generating process (DGP), forecasting the aggregate outcome would be equivalent to linear aggregation of individual forecasts. 	However, in practice forecasts are estimated using data and the interplay between the estimation and specification errors becomes important.

\medskip

We focus on linear forecasting models with one of the following two approaches:
\begin{enumerate}
	\item \textbf{Direct aggregate forecasting.} Two possibilities arise here:
	\begin{enumerate}
		\item We regress the aggregate outcome $Y_{t+1}$ on the aggregate covariates computed as $\omega^\top\mathbf{x}_t$ = $\sum_{i=1}^N\omega_ix_{i,t}.$
		If $\hat\beta^{\rm A}\in\R^p$ are the estimated regression coefficients, the forecasted aggregate outcome is
			$\hat Y_{t+1}^{\rm A}$ = $\sum_{i=1}^N\omega_ix_{i,t}^\top\hat\beta^{\rm A}.$
		\item We regress the aggregate outcome $Y_{t+1}$ on all individual country-level covariates $\mathbf{x}_t = (x_{1,t},\dots,x_{N,t})^\top$ weighted by $\omega\in\R^N$. If $\hat\beta_i^{\rm AC}\in\R^p$ are the estimated regression coefficients for country $i$, the forecasted aggregate outcome is
			$\hat Y_{t+1}^{\rm AC}$ = $\sum_{i=1}^N\omega_ix_{i,t}^\top\hat\beta_i^{\rm AC}.$
	\end{enumerate}
	\item \textbf{Individual component forecasting.} This approach consists of forecasting the entire panel of individual outcomes. Two possibilities arise here:
	\begin{enumerate}
		\item individual time-series regressions with predictions $\hat y_{i,t+1}^{\rm TS}=x_{i,t}^\top\hat\beta_i^{\rm TS}$ allowing for heterogeneous coefficients;
		\item pooled panel data regressions with predictions $\hat y_{i,t+1}^{\rm P}=x_{i,t}^\top\hat\beta^{\rm P}$ having homogeneous coefficients.
	\end{enumerate}
	The forecasted aggregate outcome in both cases is
		$\hat Y_{t+1}^{k}$ = $\sum_{i=1}^N\omega_i\hat y_{i,t+1}^{\rm k},$ $k\in\{\mathrm{TS}, \mathrm{P}\}.$
\end{enumerate}
It is assumed that the DGP is as follows:
\begin{equation}\label{eq:dgp}
	y_{i,t+1} = x_{i,t}^\top\beta_{i} + u_{i,t+1}, 	\qquad	\mathbb{E}[u_{i,t+1}\mid\mathcal{F}_t]=0
\end{equation}
with $i=1,\dots,N, \ t=1,\dots,T$. For $k\in\{\mathrm{A,AC,TS,P}\}$, let
$\hat Y_{T+1}^k$ = $\sum_{i=1}^{N}\omega_i x_{i,T}^{\top}\hat\beta_i^k,$ and 
$Y_{T+1}$ = $\sum_{i=1}^{N}\omega_i y_{i,T+1}$
be the forecasted and the realized aggregate outcomes. Using the above equation, we can write the forecast error as a sum of estimation error, heterogeneous coefficients bias, and irreducible error:
\begin{equation}\label{eq:forecast_error}
%	\begin{aligned}
		\hat Y_{T+1}^k - Y_{T+1}  = \sum_{i=1}^{N}\omega_i x_{i,T}^{\top}(\hat\beta_i^k  - \beta_i^k) + \underbrace{\sum_{i=1}^{N}\omega_i x_{i,T}^{\top}(\beta_i^k  - \beta_i)}_{=:H_{N,T}^k} -  \sum_{i=1}^{N}\omega_i u_{i,T+1},
%	\end{aligned}
\end{equation}
where $\beta_i^k$ is a vector of model-specific linear projection coefficients and $H_{N,T}^k$ is the heterogeneity bias.
Combining the identity in Equation~\eqref{eq:forecast_error} with $\mathbb{E}[u_{i,T+1}\mid\mathcal{F}_T]=0$ in Equation~\eqref{eq:dgp}, the mean-squared forecasting error is decomposed as:
\begin{eqnarray}\label{eq:msfe}
%	\begin{aligned}
		\mathrm{MSFE}(\hat Y_{T+1}^k) & = &  \left|\sum_{i=1}^{N}\omega_i x_{i,T}^{\top}(\hat\beta_i^k  - \beta_i^k) + H_{N,T}^k\right|^2 + \mathrm{Var}\left(\sum_{i=1}^{N}\omega_i u_{i,T+1}|\mathcal{F}_T\right) \\
		& \leq &   2\left|\sum_{i=1}^{N}\omega_i x_{i,T}^{\top}(\hat\beta_i^k  - \beta_i^k)\right|^2 + 2\left|H_{N,T}^{k}\right|^2 + \omega^\top\Sigma_h\omega, \notag
%	\end{aligned}
\end{eqnarray}
where $\Sigma_h = \mathrm{Var}(\mathbf{u}_{T+1}|\mathcal{F}_T)$. In the aggregate-on-aggregate regressions the aggregate outcome is regressed on aggregate covariates to obtain the direct aggregate forecast. Let $\hat\beta^{\rm A}$ be the estimated slope coefficient and let
	$\beta^{\rm A}$ :=  $\arg\min_{b\in\R^p}\mathbb{E}\left|Y_{t+1} - \sum_{i=1}^N\omega_ix_{i,t}^\top b\right|^2$
be the corresponding population projection coefficients. The forecasted aggregate outcome is $\hat Y_{T+1}^{\rm A}=\sum_{i=1}^N\omega_ix_{i,T}^\top\hat\beta^{\rm A}$.

In what follows, we focus on estimators that correspond either to individual time-series regressions or to pooled panel data regressions with sparse-group LASSO regularization.
\begin{assumption}\label{as:subG_indep_weights}
	(i) the data are sub-Gaussian, stationary, exponentially $\tau$-mixing, and cross-sectionally independent; (ii) the restricted eigenvalue condition holds; (iii) the LASSO penalty is $\lambda\sim\sqrt{\log(p)/T}$; (iv) the approximation error is negligible; (v) the weights are such that $|\omega|_\infty = O(N^{-1})$.
\end{assumption}
\noindent Assumption~\ref{as:subG_indep_weights} (i)-(iv) corresponds to \cite{babii2022machine}, Assumptions 3.1-3.4, where a technically precise statement can be found. The assumption on weights holds for the averaging aggregation scheme.

\begin{theorem}\label{thm:agg_on_agg}
	Under Assumption~\ref{as:subG_indep_weights}, the mean-squared forecasting error of the aggregate-on-aggregate regression forecast is
		$\mathrm{MSFE}(\hat Y_{T+1}^{\rm A})$ = $O_P\left(\frac{s^2\log^2 p}{NT} + \left|H_{N,T}^{\rm A}\right|^2 + \frac{1}{N}\right),$
	where $s=\mathrm{supp}(\beta^{\rm A})$ is the sparsity of the population projection coefficient.
\end{theorem}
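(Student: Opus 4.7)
The plan is to instantiate the decomposition in~\eqref{eq:msfe} for the aggregate-on-aggregate estimator, where $\hat\beta_i^{\rm A}\equiv\hat\beta^{\rm A}$ does not depend on $i$, and to bound each of the three right-hand terms separately. Introducing the aggregate covariate $\bar x_T:=\sum_{i=1}^N\omega_i x_{i,T}=\omega^\top\mathbf{x}_T$, the estimation contribution collapses to $|\bar x_T^\top(\hat\beta^{\rm A}-\beta^{\rm A})|^2$, which by H\"older's inequality is at most $|\bar x_T|_\infty^2\,|\hat\beta^{\rm A}-\beta^{\rm A}|_1^2$; the heterogeneity bias $|H_{N,T}^{\rm A}|^2$ is kept as is; and the noise term is $\omega^\top\Sigma_h\omega$.

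For the $\ell_1$ estimation-error factor, I would invoke the sg-LASSO oracle inequality of \cite{babii2022machine} applied to the scalar regression of $Y_{t+1}$ on $\bar x_t$. Since linear aggregation is measurable, the aggregate process $(Y_t,\bar x_t)$ inherits sub-Gaussianity, stationarity, and exponential $\tau$-mixing from Assumption~\ref{as:subG_indep_weights}(i); together with parts (ii)--(iv) this yields $|\hat\beta^{\rm A}-\beta^{\rm A}|_1=O_P(s\sqrt{\log p/T})$, where $s$ is the sparsity of $\beta^{\rm A}$. For the $\ell_\infty$ factor, cross-sectional independence and $|\omega|_\infty=O(N^{-1})$ imply that each coordinate $\bar x_{T,j}=\sum_{i=1}^N\omega_i x_{i,T,j}$ is a weighted sum of independent sub-Gaussian variables with variance proxy bounded by $|\omega|_2^2\leq N|\omega|_\infty^2=O(1/N)$. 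A standard union-bound maximal inequality over the $p$ coordinates then delivers $|\bar x_T|_\infty=O_P(\sqrt{\log p/N})$. Multiplying the two gives the target $O_P(s^2\log^2 p/(NT))$ contribution.

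For the noise term, cross-sectional independence of $\mathbf{u}_{T+1}$ combined with sub-Gaussian (hence uniformly bounded) conditional variances makes $\Sigma_h$ conditionally diagonal with $O(1)$ entries, so $\omega^\top\Sigma_h\omega\lesssim|\omega|_2^2=O(1/N)$. Substituting all three bounds into~\eqref{eq:msfe} produces the stated rate. The step that requires the most care is verifying that the aggregate design $(\bar x_1,\dots,\bar x_T)$ inherits the hypotheses underlying the oracle inequality of \cite{babii2022machine}: sub-Gaussianity, stationarity, and $\tau$-mixing transfer automatically because aggregation is a linear functional of the panel, whereas the restricted eigenvalue condition must hold for the aggregate design matrix itself and is absorbed directly into Assumption~\ref{as:subG_indep_weights}(ii). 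Once these transfer properties are established, the remainder is a routine combination of H\"older's inequality, a sub-Gaussian maximal inequality, and the known $\ell_1$ rate for sg-LASSO.
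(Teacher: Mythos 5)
Your proposal is correct and follows essentially the same route as the paper's proof: the same decomposition from Equation~\eqref{eq:msfe}, H\"older's inequality to split the estimation term into the $\ell_1$ rate $|\hat\beta^{\rm A}-\beta^{\rm A}|_1=O_P(s\sqrt{\log p/T})$ from \cite{babii2022machine} and the maximal bound $\max_{j\in[p]}|\sum_{i=1}^N\omega_i x_{i,T,j}|=O_P(\sqrt{\log p/N})$, and the diagonal-$\Sigma_h$ argument giving $\omega^\top\Sigma_h\omega=O(N^{-1})$. Your added remark that the aggregate design must itself inherit the sub-Gaussian/mixing/restricted-eigenvalue conditions is a point the paper treats implicitly by appealing to Assumption~\ref{as:subG_indep_weights}, but it does not change the argument.
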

\begin{proof}
	Under Assumption~\ref{as:subG_indep_weights}, we obtain from \cite{babii2022machine} that
		$|\hat\beta^{\rm A} - \beta^{\rm A}|_1$ = $O_P\left(s\sqrt{\frac{\log p}{T}}\right).$
	Strictly speaking, \cite{babii2022machine} assume polynomial tails/mixing which leads to an additional polynomial term vanishing under exponential mixing/tails; see also \cite{merlevede2011bernstein}, Theorem 1. If errors are cross-sectionally independent, then $\Sigma_h$ is diagonal and $\omega^\top\Sigma_h\omega = O(N^{-1})$ under Assumption~\ref{as:subG_indep_weights} (v). Similarly, under Assumption~\ref{as:subG_indep_weights} (i) and (v), we have
		$\max_{j\in[p]}\mathrm{Var}\left(\sum_{i=1}^N\omega_ix_{i,T,j}\right)$ = $O(N^{-1}),$
	so that
		$\max_{j\in[p]}\left|\sum_{i=1}^N\omega_ix_{i,t,j}\right|$ = $O_P\left(\sqrt{\frac{\log p}{N}}\right).$
		Combining the above computations with the inequality in Equation (\ref{eq:msfe}) and H\"{o}lder's inequality, we obtain
	\begin{equation*}
		\begin{aligned}
			\mathrm{MSFE}(\hat Y_{T+1}^{\rm A}) & = \mathbb{E}\left[(\hat Y_{T+1}^{\rm A} - Y_{T+1})^{2}\mid\mathcal{F}_{T}\right] \\
			& \leq 2\left|\sum_{i=1}^N\omega_ix_{i,T}^\top(\hat\beta^{\rm A} - \beta^{\rm A})\right|^2 + 2\left|H_{N,T}^{\rm A}\right|^2 + \omega^\top\Sigma_h\omega \\
			& \leq 2|\hat\beta^{\rm A} - \beta^{\rm A}|_1^2\max_{j\in[p]}\left|\sum_{i=1}^N\omega_ix_{i,T,j}\right|^2 + O_P\left(\left|H_{N,T}^{\rm A}\right|^2 + \frac{1}{N}\right) \\
			& = O_P\left(\frac{s^2\log^2 p}{NT} + \left|H_{N,T}^{\rm A}\right|^2 + \frac{1}{N}\right).
		\end{aligned}
	\end{equation*}
\end{proof}

\begin{remark}[Homogeneous Slopes]
	It is easy to see that when the true coefficient is homogeneous, i.e., $\beta_i=\beta$ for all $i=1,\dots,N$, then the population projection coefficient is $\beta^{\rm A}=\beta$ and the heterogeneity bias  vanishes; see Equation~\eqref{eq:forecast_error}. In this case, we obtain
$\mathrm{MSFE}(\hat Y_{T+1}^{\rm A})$ = $O_P\left(\frac{s^2\log^2 p}{NT} + \frac{1}{N}\right).$
\end{remark}

\begin{remark}[Why $N$ helps?]
	The quality of forecasts improves with larger cross-sectional size $N$ since the irreducible error is of order $O(N^{-1})$. In contrast, the irreducible error never vanishes with time series data alone.
\end{remark}

\paragraph{Aggregate-on-Components Regression:} In the aggregate-on-components regressions the aggregate outcome is regressed on all individual component covariates. Let $\hat\beta^{\rm AC}\in\R^{pN}$ be the vector of estimated coefficients and
	$\beta^{\rm AC}$ = $\arg\min_{\{b_i\}_{i=1}^N}\mathbb{E}\left|Y_{t+1} - \sum_{i=1}^N\omega_ix_{i,t}^\top b_i\right|^2$
be the corresponding population projection coefficients. The forecasted aggregate outcome is $\hat Y_{T+1}^{\rm AC} = \sum_{i=1}^N\omega_ix_{i,T}^\top\hat\beta_i^{\rm AC}$. 

\begin{theorem}\label{thm:agg_on_comp}
	Under Assumption~\ref{as:subG_indep_weights}, the mean-squared forecasting error of the aggregate-on-components regression forecast is
		$\mathrm{MSFE}(\hat Y_{T+1}^{\rm AC})$ = $O_P\left(\frac{s_{\max}^2\log^2(pN)}{T} + \frac{1}{N}\right),$
	where $s_{\max}=\max_{i\in[N]}\mathrm{supp}(\beta_i^{\rm AC})$ is the maximum sparsity of the population projection coefficients.
\end{theorem}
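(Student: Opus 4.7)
The plan is to mirror the proof of Theorem~\ref{thm:agg_on_agg}, recasting the aggregate-on-components regression as an ordinary high-dimensional sparse regression with $pN$ regressors and carefully tracking how the powers of $N$ enter through the weights and through the total sparsity.

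First, I would show that the heterogeneity bias $H_{N,T}^{\rm AC}$ is in fact zero under Assumption~\ref{as:subG_indep_weights}. Cross-sectional independence in (i) implies that the blocks $x_{i,t}$ and $x_{j,t}$ are uncorrelated for $i\neq j$, so the first-order condition defining $\beta^{\rm AC}$ decouples block by block, yielding $\beta_i^{\rm AC} = \mathbb{E}[x_{i,t}x_{i,t}^\top]^{-1}\mathbb{E}[x_{i,t}y_{i,t+1}]$. Combined with the DGP in~\eqref{eq:dgp} and $\mathbb{E}[u_{i,t+1}\mid\mathcal{F}_t]=0$, this gives $\beta_i^{\rm AC}=\beta_i$, so the heterogeneity-bias term in the decomposition~\eqref{eq:forecast_error} vanishes exactly, and only the estimation-error and irreducible-error pieces of~\eqref{eq:msfe} remain.

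Next, I would apply the LASSO oracle inequality of \cite{babii2022machine}, used already in the proof of Theorem~\ref{thm:agg_on_agg}, to the pooled regression of $Y_{t+1}$ on the $pN$-dimensional stacked covariate vector, with penalty rescaled as $\lambda\sim\sqrt{\log(pN)/T}$. Since the total support of $\beta^{\rm AC}=(\beta_1^{\rm AC},\dots,\beta_N^{\rm AC})$ is bounded by $N s_{\max}$, this delivers $|\hat\beta^{\rm AC}-\beta^{\rm AC}|_1 = O_P(N s_{\max}\sqrt{\log(pN)/T})$. On the design side, the sub-Gaussian maximal inequality applied to the $pN$ scalars $\omega_i x_{i,T,j}$, together with $|\omega|_\infty=O(N^{-1})$, gives $\max_{i,j}|\omega_i x_{i,T,j}| = O_P(N^{-1}\sqrt{\log(pN)})$. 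H\"older's inequality in~\eqref{eq:msfe} then combines the two bounds, and the factor $N^2$ from the squared sparsity cancels the $N^{-2}$ from the squared weights, leaving $|\sum_i\omega_i x_{i,T}^\top(\hat\beta_i^{\rm AC}-\beta_i^{\rm AC})|^2 = O_P(s_{\max}^2\log^2(pN)/T)$. The irreducible-error variance $\omega^\top\Sigma_h\omega = O(N^{-1})$ is handled exactly as in Theorem~\ref{thm:agg_on_agg}, and summing the three contributions produces the stated rate.

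The main obstacle is the identification step $\beta_i^{\rm AC}=\beta_i$: without cross-sectional independence the normal equations would not decouple block by block, and one would have to carry a genuinely nonzero heterogeneity bias arising from cross-country correlations in covariates. Granted Assumption~\ref{as:subG_indep_weights}(i), this step is clean, and the remaining ingredients are essentially routine adaptations of the calculation in the proof of Theorem~\ref{thm:agg_on_agg} with $p$ replaced by $pN$ and $s$ by $N s_{\max}$.
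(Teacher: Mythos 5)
Your overall architecture coincides with the paper's: you kill the heterogeneity bias, invoke the $\ell_1$ oracle bound $|\hat\beta^{\rm AC}-\beta^{\rm AC}|_1 = O_P\bigl(s_{\max}N\sqrt{\log(pN)/T}\bigr)$ with the penalty rescaled to $\log(pN)$, bound $\max_{i\in[N],j\in[p]}|\omega_i x_{i,T,j}| = O_P\bigl(\sqrt{\log(pN)}/N\bigr)$ via $|\omega|_\infty=O(N^{-1})$, apply H\"older's inequality in \eqref{eq:msfe}, and treat $\omega^\top\Sigma_h\omega=O(N^{-1})$ exactly as in Theorem~\ref{thm:agg_on_agg}; the cancellation of $N^2$ against $N^{-2}$ is precisely how the stated rate emerges in the paper as well.

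The one step where you deviate, and where your argument as written does not go through, is the identification $\beta_i^{\rm AC}=\beta_i$. You obtain it by claiming that cross-sectional independence makes the population normal equations decouple block by block. Independence only gives $\mathbb{E}[x_{i,t}x_{j,t}^\top]=\mathbb{E}[x_{i,t}]\,\mathbb{E}[x_{j,t}]^\top$ for $i\neq j$, which is not the zero matrix in general---the covariates are not centered and may include a constant---so the cross-block second moments do not vanish, the first-order conditions do not decouple, and your block formula $\mathbb{E}[x_{i,t}x_{i,t}^\top]^{-1}\mathbb{E}[x_{i,t}y_{i,t+1}]$ is the time-series projection coefficient, which need not coincide with $\beta_i^{\rm AC}$ absent that decoupling. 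The paper's argument is simpler and does not use independence at all: by \eqref{eq:dgp}, $Y_{t+1}-\sum_{i=1}^N\omega_i x_{i,t}^\top\beta_i=\sum_{i=1}^N\omega_i u_{i,t+1}$, and since $\mathbb{E}[u_{i,t+1}\mid\mathcal{F}_t]=0$ with $\mathcal{F}_t$ generated by the covariates of \emph{all} units, this residual is orthogonal to every linear combination $\sum_{i}\omega_i x_{i,t}^\top b_i$; hence the true $\{\beta_i\}$ already solve the population projection problem, giving $\beta_i^{\rm AC}=\beta_i$ and $H_{N,T}^{\rm AC}=0$ irrespective of cross-sectional dependence. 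Accordingly, your closing diagnosis that the identification hinges on Assumption~\ref{as:subG_indep_weights}(i) is mistaken: independence is used only for $\omega^\top\Sigma_h\omega=O(N^{-1})$ and the maximal inequalities, not for eliminating the heterogeneity bias. With that step replaced by the direct orthogonality argument, your proof is the paper's proof.
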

\begin{proof}
	Note that since $Y_{t+1} = \sum_{i=1}^N\omega_ix_{i,t}^\top\beta_i + \sum_{i=1}^N\omega_iu_{i,t+1}$, see Equation~\eqref{eq:dgp}, we have $\beta_i^{\rm AC} = \beta_i$ for all $i=1,\dots,N$. This shows that the heterogeneity bias is zero: $H_{N,T}^{\rm AC}=0$. Note also that under Assumption~\ref{as:subG_indep_weights}, we obtain from \cite{babii2022machine}: $|\hat\beta^{\rm AC} - \beta^{\rm AC}|_1$ = $O_P\left(s_{\max}N\sqrt{\frac{\log (pN)}{T}} \right).$
	Under Assumption~\ref{as:subG_indep_weights} (i) and (v), we also have
		$\max_{i\in[N],j\in[p]}\left|\omega_ix_{i,T,j}\right|$ $\leq$ $|\omega|_\infty\max_{i\in[N],j\in[p]}|x_{i,T,j}|$ = $O_P\left(\frac{\sqrt{\log(pN)}}{N}\right).$
	Combining these computations with Equation (\ref{eq:msfe}), the mean-squared forecasting error is
	\begin{equation*}
		\begin{aligned}
			\mathrm{MSFE}(\hat Y_{T+1}^{\rm AC}) & = \mathbb{E}\left[(\hat Y_{T+1}^{\rm AC} - Y_{T+1})^{2}\mid\mathcal{F}_{T}\right] \\
			& =  \left|\sum_{i=1}^{N}\omega_i x_{i,T}^{\top}(\hat\beta_i^{\rm AC}  - \beta_i^{\rm AC})\right|^2 + \omega^\top\Sigma_h\omega \\
			& \leq \max_{i\in[N],j\in[p]}\left|\omega_ix_{i,T,j}\right|^2|\hat\beta^{\rm AC} - \beta^{\rm AC}|_1^2 + \omega^\top\Sigma_h\omega  = O_P\left(\frac{s_{\max}^2\log^2(pN)}{T} + \frac{1}{N}\right).
		\end{aligned}
	\end{equation*}
\end{proof}

\subsection{Individual Component Forecasting}	
\paragraph{Pooled Panel Regressions:}
In the pooled panel data regressions, the entire panel is used to estimate a single homogeneous regression. Let $\hat\beta^{\rm P}\in\R^p$ be the vector of estimated coefficients and let $\beta^{\rm P}$ = $\arg\min_{b\in\R^p}\mathbb{E}\left|y_{i,t+1}-x_{i,t}^\top b\right|^2$
be the corresponding population projection coefficient. The forecasted aggregate outcome is $\hat Y_{T+1}^{\rm P}$ = $\sum_{i=1}^N\omega_ix_{i,T}^\top\hat\beta^{\rm P}$.

\begin{theorem}\label{thm:pool}
	Suppose that Assumption~\ref{as:subG_indep_weights} (i)-(ii) and (iv)-(v) hold. If $\lambda\sim \sqrt{\log(p)/(NT)}$, then
		$\mathrm{MSFE}(\hat Y_{T+1}^{\rm P})$ = $O_P\left(s^2\frac{\log^2 p}{N^2T} + \left|H_{N,T}^{\rm P}\right|^2 + \frac{1}{N}\right),$
	where $s=\mathrm{supp}(\beta^{\rm P})$ is the sparsity of the population projection coefficient.
\end{theorem}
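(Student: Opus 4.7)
The plan is to mirror the proof of Theorem~\ref{thm:agg_on_agg}, with the crucial change being that the pooled panel LASSO benefits from an effective sample size $NT$ rather than $T$. First, I would invoke the MSFE decomposition of Equation~\eqref{eq:msfe} specialized to $k=\mathrm{P}$, and bound the estimation-error piece by H\"{o}lder's inequality,
\[
\left|\sum_{i=1}^N\omega_ix_{i,T}^\top(\hat\beta^{\rm P}-\beta^{\rm P})\right|^2\le|\hat\beta^{\rm P}-\beta^{\rm P}|_1^2\cdot\max_{j\in[p]}\left|\sum_{i=1}^N\omega_ix_{i,T,j}\right|^2.
\]
The two ``aggregate-covariate'' factors on the right are handled exactly as in Theorem~\ref{thm:agg_on_agg}: under cross-sectional independence from Assumption~\ref{as:subG_indep_weights} (i) and $|\omega|_\infty=O(N^{-1})$ from (v), one gets $\max_{j\in[p]}|\sum_i\omega_ix_{i,T,j}|=O_P(\sqrt{\log p/N})$ and $\omega^\top\Sigma_h\omega=O(1/N)$, so the irreducible-error term and the aggregate-covariate sup-norm term contribute, respectively, $O(1/N)$ and $O_P(\log p/N)$ to the final bound.

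The key new step is to establish the sharper pooled $\ell_1$-rate
\[
|\hat\beta^{\rm P}-\beta^{\rm P}|_1=O_P\!\left(s\sqrt{\log p/(NT)}\right).
\]
To derive this, I would show that the pooled score $\frac{1}{NT}\sum_{i,t}x_{i,t}(y_{i,t+1}-x_{i,t}^\top\beta^{\rm P})$ concentrates coordinate-wise at rate $\sqrt{\log p/(NT)}$: cross-sectional independence in (i) buys a $\sqrt{N}$ gain, while exponential $\tau$-mixing and sub-Gaussianity in the time dimension allow a Bernstein-type maximal inequality in the spirit of \cite{merlevede2011bernstein} to contribute a $\sqrt{T}$ gain. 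Combined with the restricted eigenvalue condition of Assumption~\ref{as:subG_indep_weights} (ii) applied to the pooled Gram matrix $\frac{1}{NT}\sum_{i,t}x_{i,t}x_{i,t}^\top$, negligible approximation error from (iv), and the rescaled penalty $\lambda\sim\sqrt{\log p/(NT)}$ stated in the theorem, the standard sg-LASSO oracle argument of \cite{babii2022machine,babii2022machinepanel} then yields the desired $\ell_1$-rate.

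Plugging the two bounds into the H\"{o}lder factorization produces
\[
|\hat\beta^{\rm P}-\beta^{\rm P}|_1^2\cdot\max_{j}\left|\sum_i\omega_ix_{i,T,j}\right|^2=O_P\!\left(\frac{s^2\log p}{NT}\right)\cdot O_P\!\left(\frac{\log p}{N}\right)=O_P\!\left(\frac{s^2\log^2 p}{N^2T}\right),
\]
which combined with the heterogeneity-bias term $|H_{N,T}^{\rm P}|^2$ and the irreducible-error term $O(1/N)$ matches the stated rate. The main obstacle is the pooled concentration step: one must verify that the $\sqrt{N}$ gain from cross-sectional independence is not eroded by the serial dependence of $x_{i,t}$ across $t$, and that the restricted eigenvalue condition transfers from the time-series design to the $NT$-dimensional pooled design. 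These are precisely where Assumption~\ref{as:subG_indep_weights} (i) and (ii) get used, and the rescaling $\lambda\sim\sqrt{\log p/(NT)}$ substitutes for the time-series calibration (iii) that is dropped from the hypotheses.
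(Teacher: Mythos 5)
Your proposal is correct and follows essentially the same route as the paper: the MSFE decomposition in Equation~\eqref{eq:msfe}, H\"{o}lder's inequality with $\max_{j\in[p]}|\sum_i\omega_ix_{i,T,j}|=O_P(\sqrt{\log p/N})$ and $\omega^\top\Sigma_h\omega=O(N^{-1})$, combined with the pooled rate $|\hat\beta^{\rm P}-\beta^{\rm P}|_1=O_P\bigl(s\sqrt{\log p/(NT)}\bigr)$. The only difference is that you sketch a derivation of that pooled $\ell_1$-rate from a score-concentration plus restricted-eigenvalue argument, whereas the paper simply cites the oracle inequality of \cite{babii2022machinepanel} for it.
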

\begin{proof}
	Under Assumption~\ref{as:subG_indep_weights}, we obtain from \cite{babii2022machinepanel} that
		$|\hat\beta^{\rm P} - \beta^{\rm P}|_1$ = $O_P\left(s\sqrt{\frac{\log p}{NT}} \right).$
	Combining these computations with Equation (\ref{eq:msfe}), the mean-squared forecasting error is
	\begin{equation*}
		\begin{aligned}
			\mathrm{MSFE}(\hat Y_{T+1}^{\rm P}) & = \mathbb{E}\left[(\hat Y_{T+1}^{\rm P} - Y_{T+1})^{2}\mid\mathcal{F}_{T}\right]  = \left|\sum_{i=1}^N\omega_ix_{i,T}^\top(\hat\beta^{\rm P} - \beta^{\rm P})\right|^2 + \omega^\top\Sigma_h\omega \\
			& \leq 2|\hat\beta^{\rm P} - \beta^{\rm P}|_1^2\max_{j\in[p]}\left|\sum_{i=1}^N\omega_ix_{i,T,j}\right|^2 + O_P\left(\left|H_{N,T}^{\rm P}\right|^2 + \frac{1}{N}\right) \\
			& = O_P\left(s^2\frac{\log^2 p}{N^2T} + \left|H_{N,T}^{\rm P}\right|^2 + \frac{1}{N}\right).
		\end{aligned}
	\end{equation*}
\end{proof}

\begin{remark}
	The pooled panel data regression forecast is always better than the aggregate-on-aggregate regression forecast. It has lower estimation error than the aggregate-on-components regression forecast, but the precise ranking of these two methods depends on whether heterogeneity bias is substantial relative to the estimation error.
\end{remark}

\paragraph{Individual Time-Series Regressions:}
In the individual time-series regressions, each individual time series is forecasted separately. Let $\hat\beta_i^{\rm TS}\in\R^p$ be the vector of estimated coefficients for unit $i$ and let $\beta_i^{\rm TS}$ = $\arg\min_{b\in\R^p}\mathbb{E}\left|y_{i,t+1}-x_{i,t}^\top b\right|^2$ be the corresponding population projection coefficient. The forecasted aggregate outcome is $\hat Y_{T+1}^{\rm TS} = \sum_{i=1}^N\omega_ix_{i,T}^\top\hat\beta_i^{\rm TS}$.

\begin{theorem}\label{thm:indTS}
	Under Assumption~\ref{as:subG_indep_weights}, the mean-squared forecasting error of the individual time-series regression forecast is
		$\mathrm{MSFE}(\hat Y_{T+1}^{\rm TS})$ = $O_P\left(\sum_{i=1}^Ns_i^2\frac{\log(pN)\log p}{NT} + \frac{1}{N}\right),$
	where $s_i=\mathrm{supp}(\beta_i^{\rm TS})$ is the sparsity of the population projection coefficient for unit $i$.
\end{theorem}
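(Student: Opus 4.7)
The plan is to follow the same template as Theorems~\ref{thm:agg_on_agg} and~\ref{thm:pool}: start from the decomposition in Equation~\eqref{eq:msfe} and bound the three pieces separately. The first task is to observe that the heterogeneity bias vanishes. Because each individual regression projects $y_{i,t+1}$ on its own regressor $x_{i,t}$ and the DGP~\eqref{eq:dgp} satisfies $\mathbb{E}[u_{i,t+1}\mid\mathcal{F}_t]=0$, the unit-by-unit projection coefficient equals the true slope, $\beta_i^{\rm TS}=\beta_i$, so $H_{N,T}^{\rm TS}=0$. Under cross-sectional independence in Assumption~\ref{as:subG_indep_weights}(i) together with $|\omega|_\infty=O(N^{-1})$ in (v), the irreducible-error term contributes $\omega^{\top}\Sigma_h\omega=O(N^{-1})$ exactly as in Theorem~\ref{thm:agg_on_agg}.

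The next step handles the estimation error. I would invoke the sparse-group LASSO oracle inequality of \cite{babii2022machine} on each of the $N$ time series to obtain $\mathbb{E}|\hat\beta_i^{\rm TS}-\beta_i|_1^2=O(s_i^2\log p/T)$, and then sum over $i$ and apply Markov's inequality to get $\sum_{i=1}^{N}|\hat\beta_i^{\rm TS}-\beta_i|_1^2=O_P\!\left(\sum_{i}s_i^{2}\log p/T\right)$. Separately, sub-Gaussianity and a union bound over the $Np$ regressor coordinates at time $T$ give $\max_{i\in[N],j\in[p]}|x_{i,T,j}|=O_P(\sqrt{\log(pN)})$; this is the source of the $\log(pN)$ factor in the final rate.

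I then combine these ingredients. Applying $\left(\sum_{i=1}^{N}a_i\right)^2\le N\sum_{i=1}^{N}a_i^2$ with $a_i=\omega_i x_{i,T}^{\top}(\hat\beta_i^{\rm TS}-\beta_i)$, followed by H\"older's inequality and the bounds on $|\omega_i|$ and $\max_{i,j}|x_{i,T,j}|$, yields
\begin{equation*}
\left|\sum_{i=1}^{N}\omega_i x_{i,T}^{\top}(\hat\beta_i^{\rm TS}-\beta_i)\right|^2
\leq N\,|\omega|_\infty^{2}\max_{i,j}|x_{i,T,j}|^{2}\sum_{i=1}^{N}|\hat\beta_i^{\rm TS}-\beta_i|_1^{2}
= O_P\!\left(\sum_{i=1}^{N}s_i^{2}\frac{\log(pN)\log p}{NT}\right).
\end{equation*}
Plugging this and $\omega^{\top}\Sigma_h\omega=O(1/N)$ into~\eqref{eq:msfe}, with $H_{N,T}^{\rm TS}=0$, gives the claim.

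The main subtlety is handling the $N$ separate oracle inequalities coherently. Turning $N$ individual $O_P$ statements into a sum is only safe if one works in expectation or establishes a uniform high-probability bound; I would rely on the in-expectation version of the oracle inequality in \cite{babii2022machine} combined with Markov, which keeps exactly one $\log p$ factor in the estimation term and leaves the additional $\log(pN)$ factor to come solely from the sup-norm control of the regressors at the forecast date. As in the proof of Theorem~\ref{thm:agg_on_agg}, the switch from polynomial to exponential tails and mixing absorbs an additional polynomial term via Theorem~1 of \cite{merlevede2011bernstein}.
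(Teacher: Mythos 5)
Your proposal is correct and follows essentially the same route as the paper: the MSFE decomposition with $H_{N,T}^{\rm TS}=0$ (since $\beta_i^{\rm TS}=\beta_i$), per-unit $\ell_1$ oracle bounds of order $s_i\sqrt{\log p/T}$, a Cauchy--Schwarz step across units (your $(\sum_i a_i)^2\le N\sum_i a_i^2$ is the paper's $|\sum_i\omega_i b_i|^2\le\sum_i\omega_i^2\sum_i b_i^2$ in disguise), H\"older with $\max_{i,j}|x_{i,T,j}|=O_P(\sqrt{\log(pN)})$, and $\omega^\top\Sigma_h\omega=O(N^{-1})$. Your added care in aggregating the $N$ oracle inequalities via an in-expectation bound plus Markov is a minor tightening of a step the paper leaves implicit, not a different argument.
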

\begin{proof}
	Under Assumption~\ref{as:subG_indep_weights}, by \cite{babii2022machine}, we have $|\hat\beta_i^{\rm TS} - \beta_i^{\rm TS}|_1$ = $O_P\left(s_i\sqrt{\frac{\log p}{T}}\right).$ Combining these computations with Equation (\ref{eq:msfe}) and H\"{o}lder's inequality, we obtain
	\begin{equation*}
		\begin{aligned}
			\mathrm{MSFE}(\hat Y_{T+1}^{\rm TS}) & =\mathbb{E}\left[(\hat Y_{T+1}^{\rm TS} - Y_{T+1})^{2}\mid\mathcal{F}_{T}\right]  =  \left|\sum_{i=1}^{N}\omega_i x_{i,T}^{\top}(\hat\beta_i^{\rm TS}  - \beta_i^{\rm TS})\right|^2 + \omega^\top\Sigma_h\omega \\
			& \leq \sum_{i=1}^N\omega_i^2\sum_{i=1}^N\left|x_{i,T}^\top(\hat\beta_i^{\rm TS} - \beta_i^{\rm TS})\right|^2 + O_P(N^{-1}) \\
			& \leq \max_{1\leq i\leq N}|x_{i,T}|_\infty^2\sum_{i=1}^N\omega_i^2\sum_{i=1}^N\left|\hat\beta_i^{\rm TS} - \beta_i^{\rm TS}\right|_1^2 + O_P(N^{-1}) \\
			& = O_P\left(\sum_{i=1}^Ns_i^2\frac{\log(pN)\log p}{NT} + \frac{1}{N}\right).
		\end{aligned}
	\end{equation*}
\end{proof}

\begin{remark}
	The individual time-series regression forecast is slightly better than the aggregate-on-components regression forecast. It has higher estimation error than the pooled panel data regression forecast, but the precise ranking of these two methods depends on whether heterogeneity bias is substantial relative to the estimation error.
\end{remark}

\begin{remark}
	 To summarize the above results, we use $\mathrm{MSFE}(\hat Y_{T+1}^{\rm a})$ $\gtrsim$ $\mathrm{MSFE}(\hat Y_{T+1}^{\rm b})$ to denote that the convergence rate of $\mathrm{MSFE}(\hat Y_{T+1}^{\rm a})$ is slower than that of $\mathrm{MSFE}(\hat Y_{T+1}^{\rm b})$ with $a,b\in\{\mathrm{TS},\mathrm{AC},\mathrm{P},\mathrm{A}\}$. We have the following ranking of the mean-squared forecasting errors:
	\begin{enumerate}
		\item We always have:
			$\mathrm{MSFE}(\hat Y_{T+1}^{\rm TS})$ $\gtrsim$ $\mathrm{MSFE}(\hat Y_{T+1}^{\rm AC})$ and $\mathrm{MSFE}(\hat Y_{T+1}^{\rm P})$ $\gtrsim$ $\mathrm{MSFE}(\hat Y_{T+1}^{\rm A})$.
		\item With homogeneous coefficients: $\mathrm{MSFE}(\hat Y_{T+1}^{\rm P})$ $\gtrsim$ $\mathrm{MSFE}(\hat Y_{T+1}^{\rm A})$ $\gtrsim$ $\mathrm{MSFE}(\hat Y_{T+1}^{\rm TS})$ $\gtrsim$ $\mathrm{MSFE}(\hat Y_{T+1}^{\rm AC}).$
		\item With heterogeneous coefficients:
			$\mathrm{MSFE}(\hat Y_{T+1}^{\rm TS})$ $\gtrsim$ $\mathrm{MSFE}(\hat Y_{T+1}^{\rm AC})$ $\gtrsim$ $\mathrm{MSFE}(\hat Y_{T+1}^{\rm P}) \gtrsim \mathrm{MSFE}(\hat Y_{T+1}^{\rm A}),$
		provided that heterogeneity bias is non-negligible.
	\end{enumerate}
\end{remark}

\section{Simulation Study \label{sec:simul}}

\paragraph{Data generating process:}
In the simulation study, we consider the panel DGPs for the number of units (e.g., countries) $i=1,\dots,N,$ and time units $t=1,\dots,T$, where $N\in\{10,20\}$ and $T\in\{35, 100\}$, respectively. We generate $j=1,\dots,p$ regressors, $p\in\{50,500\}$. In our setup, the signal is $j=1$ while $j\ge2$ are noisy regressors. To simplify the DGP, we only consider the same frequency regressors as the outcome variable, hence we apply the LASSO estimator. The model we simulate from has the following form
\begin{equation*}
	y_{i,t} \;=\; \alpha_i \;+\; \gamma_i\, y_{i,t-1} \;+\; \beta_i\, x_{i,t,1} \;+\; \varepsilon_{i,t}.
\end{equation*}
For each $j\in[p]$, the regressor $x_{i,t,j}$ is an AR(1) process simulated  around unit-specific locations, i.e., 
$x_{i,t,j}$ =  $\mu_{i,j} + \xi_{i,t,j},$ and  
$\xi_{i,t,j}$ = $\rho_{i,j}\,\xi_{i,t-1,j} + \sigma_{i,j}\sqrt{1-\rho_{i,j}^2}\,\nu_{i,t,j},$
with $\nu_{i,t,j}\sim\mathcal N(0,1)$ for the Gaussian design and $\nu_{i,t,j}\sim\text{student}$-$t(5)$ for the heavy-tailed case, $\rho_{i,j}\sim\text{Uniform}(0,0.95)$, $\sigma_{i,j}>0$. We simulate $\sigma_{i,j}^2 \sim (1 + \chi^2_1)/2$. Lastly,  $ \mu_{i,j}  = (z_{i,j}^2 - 1) / \sqrt{2}$ where $ z_{i,j}\sim \mathcal N(0,1)$ or $ z_{i,j}\sim\text{student}$-$t(5)$.

\paragraph{Heterogeneity:}
First, we impose a homogeneous AR coefficient, i.e, we set it to  $\gamma_i = 0.688,\forall i\in[N].$
The heterogeneous coefficients are the intercept $\alpha_i$ and the slope $\beta_i$.  We assume the following structure for $\alpha_i$ and $\beta_i$, respectively, 
$\alpha_i$ = $\alpha_0$ + $\phi\,\mu_{i,1}$ + $\sigma_\eta\,\eta_i,$ and 
$\beta_i$ = $\beta_0$ + $\pi\,\mu_{i,1}$ + $\sigma_\zeta\,\zeta_i,$
with $\eta_i,\zeta_i\sim\mathcal N(0,1)$ for Gaussian or $\eta_i,\zeta_i\sim\text{student}$-$t(5)$ for heavy-tailed scenarios, and $\phi=\rho_{\alpha x}\mathbf{\sigma}$, $\pi=\rho_{\beta x}\mathbf{\sigma}$, 
$\sigma_\eta^2=\sigma^2-\phi^2$, $\sigma_\zeta^2=\sigma^2-\pi^2$ and $\alpha_0=0$, $\beta_0=0.5$. Lastly, the error term is simulated from 
$\varepsilon_{it}$ = $\sigma_{\varepsilon,i}\,(\zeta_{it}^2-1)/\sqrt{2},$ 
with $\zeta_{it}\sim\mathcal N(0,1)$ or $\sim\text{student}\text{-}t(5),$  $\sigma_{\varepsilon,i}>0.$
The other free parameters are fixed to the following values: $\rho_{\alpha x}=0.1$, $\rho_{\beta x}=0.5$. Throughout the experiment, we vary the parameter $\sigma\in\{0, 0.2, 0.4, 0.6, 0.8\}$, which controls the level of heterogeneity. For $\sigma=0$, $\alpha_i = \alpha_0 = 0$ and $\beta_i = \beta_0 = 0.5$, while $\sigma\neq0$ implies heterogeneous $\alpha_i$ and $\beta_i$.

\medskip

We compute predictions using four models which we estimate by applying the LASSO estimator. The first is the pooled regression model for all $N$ units, which we denote as P. Next, we estimate individual regressions for each $i$, denoted TS. The third case considers a model where we predict the aggregate outcome using all $N$ individual regressors. We call this model AC. Lastly, we consider aggregate-on-aggregate approach, denoted A, where we regress the aggregate outcome on aggregate regressors. Each aggregation assumes fixed weights set to $1/N$. 

\medskip

In Table \ref{tab:simul} we report results for the individual time-series (TS), aggregate-on-components (AC), and aggregate-on-aggregate (A) approaches, all relative to the pooled model (P) out-of-sample mean squared errors (MSE). We compute MSEs for all approaches for the target out-of-sample aggregate outcome. For pooled and individual cases, we use $1/N$ weights to aggregate our predictions. 

\medskip

The Monte Carlo simulations show that pooled regressions strike a balance between estimation error and heterogeneity bias. Compared to aggregate-on-aggregate and aggregate-on-components regressions, the pooled model generally achieves lower mean squared errors because it benefits from the larger effective sample size, which reduces estimation noise. However, when heterogeneity across units is strong, pooling imposes a homogeneity restriction that creates bias. This trade-off explains why pooled regressions outperform individual time-series regressions when the latter overfit in small samples, but can be worse when unit-specific variation matters. In short, pooled regressions dominate aggregate-based methods and are more robust than TS in finite samples, yet they may underperform when heterogeneity is substantial. 

\medskip

The simulation results also reveal that pooled regressions behave differently under distributional and dimensional changes. Moving from Gaussian to student-$t(5)$ innovations, the pooled model remains relatively stable, while competing methods such as TS and AC become more sensitive to heavy tails and deteriorate in accuracy. This robustness highlights the pooled model’s advantage in heavy-tailed environments where heavy-tailed observations inflate estimation errors elsewhere. Increasing dimensionality from $p=50$ to $p=500$ introduces additional noise, which tends to amplify the overfitting of TS and AC methods, whereas pooling, again leveraging the combined sample size, better controls variance. Overall, across both the distributional shift from Gaussian to student-$t(5)$ and the dimensional increase, pooled regressions exhibit the most reliable performance relative to the other three approaches and the results are in line with our theoretical results.
\begin{center}
	\begin{table}[ht]
			{\footnotesize
		\begin{tabular}{rccccccccccccc}
			& & & & & & & & & & & & & \\
			& \multicolumn{6}{c}{Panel A: Gaussian design} & & \multicolumn{6}{c}{Panel B: student-$t(5)$ design} \\
			&\multicolumn{3}{c}{p = 50}&\multicolumn{3}{c}{p = 500} & & \multicolumn{3}{c}{p = 50}&\multicolumn{3}{c}{p = 500} \\
			N$\backslash$T& TS & AC & A & TS & AC & A & & TS & AC & A & TS & AC & A\\ 
			\hline
			&\multicolumn{13}{c}{$\sigma$ = 0}\\
			10$\backslash$35 & 1.608 & 1.561 & 1.907 & 1.711 & 1.676 & 1.630 & & 1.748 & 1.529 & 1.815 & 1.891 & 1.886 & 1.623 \\ 
			20$\backslash$35 & 1.524 & 1.627 & 1.890 & 1.684 & 1.656 & 1.618 & & 1.719 & 1.471 & 1.740 & 1.857 & 1.828 & 1.609 \\ 
			10$\backslash$100 & 1.095 & 1.183 & 1.887 & 1.272 & 1.244 & 1.731 & & 1.324 & 1.145 & 1.753 & 1.407 & 1.445 & 1.694 \\ 
			20$\backslash$100 & 1.144 & 1.097 & 1.726 & 1.228 & 1.273 & 1.615 &  &  1.372 & 1.175 & 1.824 & 1.462 & 1.450 & 1.681 \\ 
			&\multicolumn{13}{c}{$\sigma$ = 0.2}\\
			10$\backslash$35 & 1.334 & 1.510 & 1.838 & 1.494 & 1.817 & 1.797 & & 1.291 & 1.484 & 1.547 & 1.496 & 1.518 & 1.753 \\ 
			20$\backslash$35 & 1.172 & 1.396 & 1.791 & 1.386 & 1.821 & 1.727 & &1.120 & 1.359 & 1.475 & 1.481 & 1.531 & 1.781 \\ 
			10$\backslash$100 & 0.986 & 1.143 & 1.660 & 1.023 & 1.297 & 1.634 &  & 1.044 & 1.090 & 1.470 & 1.298 & 1.393 & 1.601\\ 
			20$\backslash$100 & 0.916 & 1.063 & 1.688 & 1.001 & 1.171 & 1.483 & &  1.078 & 1.044 & 1.533 & 1.211 & 1.267 & 1.712 \\ 
			&\multicolumn{13}{c}{$\sigma$ = 0.4}\\
			10$\backslash$35 & 1.183 & 1.528 & 1.819 & 1.218 & 1.531 & 1.642 & & 1.054 & 1.429 & 1.482 & 1.388 & 1.654 & 1.643 \\ 
			20$\backslash$35 & 1.060 & 1.362 & 1.573 & 1.248 & 1.718 & 1.643 & & 0.908 & 1.315 & 1.379 & 1.357 & 1.694 & 1.621 \\ 
			10$\backslash$100 & 0.839 & 1.104 & 1.594 & 0.822 & 1.205 & 1.537 & & 0.890 & 1.217 & 1.356 & 0.945 & 1.351 & 1.514  \\ 
			20$\backslash$100 & 0.799 & 1.099 & 1.632 & 0.849 & 1.271 & 1.645 &  & 0.790 & 1.041 & 1.333 & 0.962 & 1.350 & 1.579 \\ 
			&\multicolumn{13}{c}{$\sigma$ = 0.6}\\
			10$\backslash$35 & 0.982 & 1.485 & 1.746 & 1.164 & 1.633 & 1.714 &  & 0.912 & 1.424 & 1.643 & 1.205 & 1.599 & 1.721 \\ 
			20$\backslash$35 & 0.940 & 1.404 & 1.604 & 1.108 & 1.659 & 1.580 & & 0.819 & 1.377 & 1.380 & 1.152 & 1.587 & 1.533 \\ 
			10$\backslash$100 & 0.758 & 1.130 & 1.429 & 0.856 & 1.166 & 1.437 &  & 0.691 & 1.123 & 1.249 & 0.904 & 1.205 & 1.382 \\ 
			20$\backslash$100 & 0.724 & 1.030 & 1.471 & 0.789 & 1.208 & 1.582 &  & 0.556 & 1.095 & 1.212 & 0.827 & 1.247 & 1.538 \\ 
			&\multicolumn{13}{c}{$\sigma$ = 0.8}\\
			10$\backslash$35 & 0.853 & 1.388 & 1.560 & 1.123 & 1.627 & 1.763 &  & 0.671 & 1.073 & 1.161 & 1.056 & 1.641 & 1.722 \\ 
			20$\backslash$35 & 0.881 & 1.294 & 1.609 & 1.042 & 1.651 & 1.735 & & 0.639 & 1.092 & 1.157 & 1.094 & 1.644 & 1.756 \\ 
			10$\backslash$100 & 0.699 & 1.090 & 1.461 & 0.721 & 1.222 & 1.519 &  & 0.602 & 1.082 & 1.176 & 0.851 & 1.376 & 1.678 \\ 
			20$\backslash$100 & 0.639 & 1.076 & 1.509 & 0.692 & 1.232 & 1.593 & & 0.545 & 1.050 & 1.157 & 0.807 & 1.387 & 1.685  \\ 
			\hline
		\end{tabular}}
		\caption{Monte Carlo simulation results. The table reports the out-of-sample mean squared errors (MSE) of three models: individual time-series (TS), aggregate-on-components (AC), and aggregate-on-aggregate (A), all relative to the pooled model (P). Values greater than 1 indicate that the pooled model outperforms the respective method. The data generating process includes heterogeneous intercepts and slopes, with $\sigma$ controlling the level of heterogeneity. Results are shown for both Gaussian and student-$t(5)$ innovations, as well as for two different numbers of predictors ($p=50$ and $p=500$). Each entry is based on 1000 Monte Carlo simulations. \label{tab:simul}}
	\end{table}
\end{center}

\section{Nowcasting EU output}\label{sec:eunowcasting}

%In the first subsection we describe the machine learning MIDAS panel data models that we use for nowcasting. In the second subsection we present the data and the empirical results.
In the section we describe the machine learning MIDAS panel data models that we use for nowcasting. 
Then, we present the aggregation schemes used to obtain EA-19 GDP growth nowcasts. Finally, we discuss the data and the empirical results.

\subsection{ML MIDAS panel data models\label{sec:method}}

Our approach builds on \cite{babii2022machine}, who introduced machine learning MIDAS (ML MIDAS) regressions with an application to single series nowcasting, and \cite{babii2022machinepanel}, who extended the method to panel data settings. Moreover, \cite{babii2022panelpe} consider an application of nowcasting price-earnings ratios for a large set of US firms. In Online Appendix Section \ref{sec:method} we provide a detailed discussion of these types of models.

\medskip

These models fit into a generic linear panel regression setting, albeit with regularization to take account of the high-dimensional nature of the regressors. We follow the notation in the aforementioned papers.
Define $\mathbf{y}_i$ = $(y_{i,1+h},\dots,y_{i,T+h})^\top$, with $h$ the forecasting/nowcasting horizon, $\mathbf{\tilde y}_{i,q}$ = $(y_{i,1-q},\dots,y_{i,T-q})^\top$ for $q\in[Q]$ lagged dependent variables, $\mathbf{\tilde y}_i = (\mathbf{\tilde y}_{i,1}, \dots, \mathbf{\tilde y}_{i,Q})$, and $\mathbf{u}_i$ = $(u_{i,1},\dots,u_{i,T})^\top.$
Stacking time series observations in vectors, the regression equation for each $i\in[N]$ with pooling of covariates $\mathbf{x}_i$ is:
\begin{equation*}
	\mathbf{y}_i = \alpha \iota_T + \mathbf{\tilde y}_i\rho_i + \mathbf{x}_i\beta + \mathbf{u}_i,
\end{equation*}
where $\iota_T$ is a size $T$ vector of ones, $\rho_i\in\R^{Q}$ coefficients of autoregressive lags, and $\beta\in\R^{LK}$ is a vector of regression slopes. We also define the vector of all outcomes $\mathbf{y} = (\mathbf{y}_1^\top,\dots, \mathbf{y}_N^\top)^\top$, regressors $\mathbf{X}=(\mathbf{x}_1^\top, \dots, \mathbf{x}_N^\top)^\top$, and errors $\mathbf{u} = (\mathbf{u}_1^\top,\dots,\mathbf{u}_N^\top)^\top$. Then stacking all observations together, we obtain
\begin{equation*}
	\mathbf{y} = \alpha \iota_{NT} + \mathbf{\tilde Y}\rho + \mathbf{X}\beta + \mathbf{u},
\end{equation*}
where $\mathbf{\tilde Y}$ is a diagonal matrix with elements $\mathbf{\tilde y}_i,i=1,\dots,N$, and $\rho=(\rho_1^\top,\dots,\rho_N^\top)^\top.$
To deal with the large number of predictors, we use the sg-LASSO regularization that was used successfully for individual time series machine learning regressions in \cite{babii2022machine}. The MIDAS approach reduces efficiently the dimensionality of high-frequency lag coefficients. An alternative approach, known as the U-MIDAS, see \cite{foroni2015unrestricted}, would estimate all individual coefficients associated with high-frequency covariate lags hoping that machine learning would pick up relevant lags. This strategy is not appealing because we always pay a price for the model selection which can be substantial with heavy-tailed time series data, typically leading to worse predictions compared to regularized MIDAS schemes; see \cite{babii2022machine, babii2022machinepanel} for further discussion and details. The pooled panel data estimator with heterogeneous autoregressive dynamic and a variation of sparse-group LASSO solves
\begin{equation}\label{eq:sgl}
	\min_{(a,b,c)\in\R^{1+NQ+LK}}\|\mathbf{y} - a \iota_{NT} - \mathbf{\tilde Y}b  - \mathbf{X}c\|_{NT}^2 + 2\lambda\Omega_\gamma(b,c), 
\end{equation}
where $\|.\|_{NT}^2 = |.|^2/(NT)$ is the scaled $\ell_2$ norm squared and
$\Omega_\gamma(b,c)$ = $\gamma\left[|b|_1 + |c|_1\right] + (1-\gamma)\left[\|b\|_{2,1} + \|c\|_{2,1}\right]$
is a penalty, which is a linear combination of LASSO and group LASSO penalties. The weight parameter $\gamma\in[0,1]$ determines the relative importance of the $\ell_1$ (sparsity) and the $\ell_{2,1}$ (group sparsity) norms. The amount of regularization is controlled by the regularization parameter $\lambda\geq 0$. Recall also that, for a group structure $\mathcal{G}$ described as a partition of $[p]=\{1,2,\dots,p\}$, the group LASSO norm is $\|u\|_{2,1}=\sum_{G\in\mathcal{G}}|u_G|_2$, where $u$ is a generic vector and $u_G$ are the elements of $u$ corresponding to group $G$. We assume that the group structure is observed, which in our setting corresponds to: a) country-specific autoregressive lags, and b) time series lags of covariates. It is also feasible to combine covariates of a similar nature into groups. 

\medskip

In addition to the pooled panel models, as in Section \ref{sec:theory}, we also look at variations involving individual country models. In the discussion of the empirical results, we will refer to the following models:
	\begin{itemize}
		\item \textit{Time series}: time series regressions on Euro area aggregates;
		\item \textit{Pooled}: ML MIDAS panel model with pooled auto-regressive coefficients;
		\item \textit{HetAR}: ML MIDAS panel model with country-specific autoregressive coefficients;
		\item \textit{Country-specific}: ML MIDAS model without panel component.
	\end{itemize}

\subsection{Aggregation schemes}

We use the following schemes to combine country-level GDP growth nowcasts  $y_{i,t|\tau}$ for country $i$ in quarter $t$, given the high-frequency information available up to $\tau$, and compute the aggregate Euro area level GDP nowcasts, denoted $y_{\text{ea},t|\tau}$:

\begin{enumerate}
	\item We compute the weights based on historical proportions of absolute GDP growth. Each country weight at quarter $t$ is
		$W^{(1)}_{i|t}$ = $\frac{\sum_{j=1}^{t-1} |y_{ij} |}{\sum_{i=1}^{N}\sum_{j=1}^{t-1} |y_{ij}|}.$
	\item We compute the weights based on the most recent proportions of absolute GDP growth. Each country weight at quarter $t$ is
		$W^{(2)}_{i|t}$ = $\frac{|y_{it-1}|}{\sum_{i=1}^{N} |y_{it-1}|}.$
	\item We compute the weights based on the most recent proportions of GDP level. Each country weight at quarter $t$ is
		$W^{(3)}_{i|t}$ = $\frac{\text{GDP}_{it-1}}{\sum_{i=1}^{N} \text{GDP}_{it-1}}.$
	\item We project the weights onto the most recent Euro area GDP growth series $y_{ea,t-1}$. Each country weight at quarter $t$ is obtained from regression
		$y_{ea,t-1}$ = $\sum_{i=1}^{N} W^{(4)}_{i|t} y_{it-1} + \epsilon_{t-1},$
	with $W^{(4)}_{i|t}$ constrained to be positive for each $i$ and their sum across countries equals one. We estimate $W^{(4)}_{i|t}$ using constrained least squares. We note that to ensure stability of the optimization, we add a small ridge penalty of $10^{-7}$. 
\end{enumerate}
For all weighting schemes we use $t-1$ GDP data, since at quarter $t$ this is the information available in real-time. The Euro area aggregate nowcast is computed as $ y_{\text{ea},t|\tau} = \sum_{i=1}^N W^{(q)}_{i|t}   y_{i,t|\tau},$ for $q\in\{1,2,3,4\}$, corresponding to the set of weights.

\medskip

Finally, we also consider what we call the Euro area model. In this model, we nowcast Euro area GDP growth based on aggregate Euro area data. For this, we use the machine learning MIDAS setup of \cite{babii2022machine}.

\subsection{Standard macro releases \label{sec:standardmacro}}
We use real-time vintages of standard macro monthly releases from several sources (see the Appendix for the details of each series). GDP growth is quarterly and is available in real-time in our sample. The first vintage is January 2015, for which we have GDP vintages for all EA-19 countries. 
As predictor variables, we collect 64 traditional macroeconomic series: 57 are monthly series, three are weekly, and four are daily. 
Monthly series are hard data such as the unemployment rate and industrial production; 47 are available at the country level while the remaining 10 are Euro area aggregates; weekly series are oil products which are available at the country level; four daily series are financial markets data covering stock market, gold, foreign exchange, and interest rates.

\medskip
It is worth mentioning that some macro series are available with a month, two, or even three months of delay relative to the nowcasting month. Since we use real-time data vintages, we naturally take into account such delays. This is particularly important when we analyze the additional gain in nowcasting accuracy when using news data---which we describe below---since such data are timely and available without delays.

\subsection{News data}
%% NEWS DATA
We collect news data from Dow Jones Factiva.
The data set contains daily printed and online full-text articles from three different sources dealing with economic and financial issues, namely The Economist, Reuters News, and The Wall Street Journal.
The final data set consists of approximately $2.5$ million articles and 1 billion words from January 2005 to December 2022.
%% FiGAS
We construct news-based indicators following the fine-grained aspect-based sentiment (FiGAS) by \cite{consoli2022fine}.
This is a rule-based algorithm that provides sentence-level sentiment scores  for textual information in the English language. 
The sentiment is (i) \textit{aspect-based}, meaning that it analyzes only the words in a sentence that relate to a specific topic of interest  based on linguistic dependencies, and (ii) \textit{fine-grained}, that is, the sentence-level sentiment score comes from a human-annotated dictionary tailored for economic and financial applications and is defined in $[-1,+1]$.

\medskip

%% TOPICS and LOCATIONS
We compute news-based indicators for the following three topics covering different aspects of economic and financial activity: \textit{economy}, \textit{employment}, and \textit{inflation}.
%In addition, we include the \textit{coronavirus} topic which aims at capturing the evolution of the COVID-19 pandemic across the EA-19 countries.
%\footnote{The choice of the topics is similar to \cite{colagrossi2022tracking}. In addition, we include the \textit{coronavirus} topic which aims at capturing the evolution of the COVID-19 pandemic across the Euro Area countries.}
Each topic is associated with a set of additional keywords that we look for in the articles: for instance, the \textit{economy} topic also includes related terms, like GDP, output, or economic growth.  
We compute country-specific indicators for all EA-19 member states by filtering only on sentences where there is a direct mention of the country in the analysis. 
We refer to the online appendix for additional details about the news-based sentiment indicators, the selection of the topics' keywords, and the construction of country-specific measures.

\medskip

%% SENTIMENT AND VOLUME
The output of FiGAS consists of daily news-based indicators of \textit{sentiment} and \textit{volume} for each topic and country. 
The sentiment measures are obtained by summing the sentence-level sentiment scores for all articles within that day, while the volume corresponds to the number of sentences containing an explicit mention of the topic. 
Most importantly, the news-based sentiment indicators are real-time and we include them as additional regressors with no publication delay.
%% HOW TO USE THEM
In the remainder, our baseline model will include news-based indicators about the \textit{economy}, \textit{employment}, and \textit{inflation}. %, while the \textit{coronavirus} indicator will be added only in Section \ref{sec:covid}. 
\subsection{Empirical results \label{sec:results}}

We apply the machine learning regressions described in Section \ref{sec:method} to assess whether the aggregate Euro area GDP growth nowcasts are more accurate than panel data models which are based on individual country-level data and several weighting schemes of individual country nowcasts to construct an aggregate nowcast.

\paragraph{Aggregate versus panel data regressions:}
Table \ref{tab:empirics_main_aggregate} reports the nowcast comparison between the (aggregate) Euro area model and the panel data models, namely the pooled and the HetAR panel models, and the country-specific MIDAS regressions.  
The forecast accuracy is measured as root-mean-squared errors (RMSEs) at three nowcast horizons (i.e., 2- and 1-month ahead and end of the quarter).
The first row in Panel A of Table \ref{tab:empirics_main_aggregate} reports the RMSEs of the aggregate Euro area benchmark model, while the other rows report the RMSEs of the proposed panel data models relative to the benchmark: values below unity signal a better performance of the proposed model relative to the benchmark. 
For each panel data model, we document the performance of the four different weighting schemes ($W^{(1)}$-$W^{(4)}$) to aggregate individual country forecasts as described in Section \ref{sec:eunowcasting}.
Columns 1 to 3 report the results on the full sample, while columns 4 to 6 consider only the pre-COVID period.

\begin{table}[]
	\centering
	\begin{tabular}{rccc c ccc}
		\hline
		& 2-month & 1-month & EoQ & & 2-month & 1-month & EoQ \\
		
		%&  &  &  \\ 
		& \multicolumn{3}{c}{\it Full sample} &&\multicolumn{3}{c}{\it Pre-COVID}\\[0.2em]
		& \multicolumn{6}{c}{Panel A. \it Time series}\\
		EA-19 & 4.960 & 3.642 & 3.366 && 0.185 & 0.215 & 0.176 \\
		+ ADS & 0.830 & 1.043 & 0.886$^{**}$ && 0.961 & 1.146 & 1.031 \\
		+ other countries & 1.501 & 2.136 & 1.685 && 1.653 & 1.050 & 0.753 \\
		+ countries \& ADS & 1.501 & 2.136 & 1.536 && 1.653 & 1.039 & 0.753 \\
		% & \multicolumn{6}{c}{Panel B1. \it Pooled}\\
		% $W^{(1)}$ & 0.699 & 0.829 & 0.786 && 2.182 & 1.803 & 2.351 \\
		% $W^{(2)}$ & 0.691 & 0.788 & 0.760 &&  3.049 & 2.069 & 2.832 \\
		% $W^{(3)}$ & 0.696 & 0.749 & 0.724 &&  1.705 & 1.846 & 2.015 \\
		% $W^{(4)}$ & 0.686 & 0.743 & 0.720 &&  1.540 & 1.861 & 1.995 \\
		& \multicolumn{6}{c}{Panel B. \it Pooled}\\
		$W^{(1)}$ & 0.693 & 0.814 & 0.750$^{**}$ &&  2.420 & 1.880 & 2.551 \\
		$W^{(2)}$ & 0.681 & 0.775 & 0.704$^{**}$ &&  3.201 & 2.154 & 2.973 \\
		$W^{(3)}$ & 0.680 & 0.729 & 0.664$^{**}$ &&  1.802 & 1.880 & 2.171 \\
		$W^{(4)}$ & 0.661 & 0.722 & 0.663$^{**}$ &&  1.640 & 1.883 & 2.168 \\
		% & \multicolumn{6}{c}{Panel C1. \it HetAR}\\
		% $W^{(1)}$ & 0.692 & 0.831 & 0.780 &&  2.114 & 1.493 & 1.970 \\
		% $W^{(2)}$ & 0.709 & 0.800 & 0.764 &&  3.460 & 1.938 & 2.685 \\
		% $W^{(3)}$ & 0.696 & 0.751 & 0.682 &&  1.564 & 1.486 & 1.649 \\
		% $W^{(4)}$ & 0.721 & 0.752 & 0.684 &&  1.421 & 1.503 & 1.613 \\
		& \multicolumn{6}{c}{Panel C. \it HetAR}\\
		$W^{(1)}$ & 0.677 & 0.817 & 0.756$^{**}$ &&  2.393 & 1.640 & 2.268 \\
		$W^{(2)}$ & 0.679 & 0.789 & 0.729$^{**}$ &&  3.664 & 2.074 & 2.893 \\
		$W^{(3)}$ & 0.669 & 0.738 & 0.663$^{**}$ &&  1.792 & 1.588 & 1.858 \\
		$W^{(4)}$ & 0.684 & 0.743 & 0.665$^{**}$ &&  1.684 & 1.584 & 1.835 \\
  	& \multicolumn{6}{c}{Panel D. \it Country-specific regressions}\\
		$W^{(1)}$ & 0.743 & 0.984 & 0.933$^{**}$ && 3.229 & 1.576 & 1.800 \\
		$W^{(2)}$ & 0.803 & 1.063 & 0.932$^{**}$ && 5.981 & 2.037 & 2.277 \\
		$W^{(3)}$ & 0.777 & 0.996 & 0.870$^{**}$ && 1.035 & 0.557 & 0.755 \\
		$W^{(4)}$ & 0.799 & 1.046 & 0.835$^{**}$ && 0.882 & 0.548 & 0.743$^{**}$ \\
		\hline
	\end{tabular}
	\medskip
	\caption{\small Nowcast comparison table --- root-mean-squared errors (RMSEs). 
		Nowcast horizons are 2- and 1-month ahead and end of the quarter (EoQ). 
		The first row reports the absolute RMSE for the EA-19 model using only aggregate data. All remaining rows are relative RMSEs vis-\`a-vis the first row. 
        Panel A displays results for time series regressions in which we directly nowcast Euro area GDP growth. 
        Panels B and C report results for panel regression models and different weighting schemes.
		Rows $W^{(1)}$-$W^{(4)}$ denote different weights used to compute the Euro area aggregate based on individual country nowcasts.
		For rows $W^{(1)}$-$W^{(4)}$, we report RMSEs relative to the Euro area model:   values smaller than unity indicate an improvement in prediction with respect to the benchmark.
		Panel B reports results for the \textit{Pooled} panel model for 19 countries, while Panel C for the \textit{HetAR} model. 
        Panel D reports results using country-specific MIDAS regression nowcasts aggregated based on weighting schemes $W^{(1)}$-$W^{(4)}$.
        % In both panels B and C, we add the ADS as an additional regressor: results without ADS are similar and available upond request.       
		We use 5-fold cross-validation adjusted for panel data to compute both sg-LASSO tuning parameters, see the Online Appendix Section \ref{cv.details} for further details.
        Out-of-sample periods: 2016 Q1 - 2019 Q4 (pre-COVID) and 2016 Q1 - 2022 Q4 (Full sample).
       Significance: ** $p < 0.05$; * $p < 0.1$ with the equally-weighted multi-horizon superior predictive ability test of \cite{quaedvlieg2021multi}: asterisks next to EoQ indicate a significantly better performance with respect to the benchmark for the Full or Pre-COVID samples.
		\label{tab:empirics_main_aggregate}} 
\end{table}

\medskip
%% PANEL A: US info does not matter
Panel A reports the results for the time series regressions when we nowcast directly Euro area aggregates. 
While the first row regressions include only aggregated information about the Euro area as an explanatory variable (option (a) in the Introduction), the other rows in the top panel add respectively the \cite{aruoba2009real} (ADS) index and information from individual EA-19 countries as additional regressors (option (b) in the Introduction).
A few patterns are worth highlighting.
First, the model's accuracy is largely impacted by the inclusion of the COVID period observations, with the full sample results showing much larger errors than the pre-COVID sample.
Second, the inclusion of timely information about the US business cycle proxied by the ADS index does not bring any systematic improvement with respect to the benchmark model.
Third, an even worse performance results from the inclusion of information about individual EA-19 countries into the time series regression model compared to the benchmark model only using aggregate EA data. Hence, option (a) is better than (b), and therefore using only the aggregate data in machine learning models suffices.

\medskip
%% PANEL B and C: (1) COVID, (2) HETAR
Turning to the performance of the panel data machine learning models in Panels B and C, we obtain mixed results. For the pre-COVID sample, we note that the  panel data models under perform vis-\`a-vis the benchmark.
In contrast, the panel data models always outperform the benchmark when looking at the full sample. 
This result is robust with respect to the choice of the weighting scheme and the nowcasting horizon.
The proposed models exploit the additional country-specific information included in the panel data structure: this information turns out to be redundant during normal times, while it proves relevant during the COVID pandemic. 
Comparing the two panel data models, the HetAR model generally achieves better performances than the pooled models: the inclusion of heterogeneity in the form of country-specific lags seems to be a better choice than pooling all coefficients.

\medskip
%% PANEL D - country-specific regressions

Panel D of Table \ref{tab:empirics_main_aggregate}  reports the results with country-specific MIDAS regressions -- hence not exploiting the panel data structure, but instead estimating single regressions per country -- which are then aggregated using again the same weighting schemes. Hence, Panel D differs from Panels B and C only with respect to the panel structure, while the information set, the MIDAS structure and the weighting schemes are the same.
Country-specific regressions aggregated following weighting schemes $W^{(3)}$ and $W^{(4)}$ attain good results both when looking at the full sample and in the pre-COVID period. 
On the one hand, Panel D shows gains with respect to the Euro area benchmark during the pre-COVID sample, when panel models perform poorly. 
The gains attained by country-specific regressions are large and up to 45 percentage points. 
On the other hand, the performance in the full sample, although still better than the benchmark, does not improve with respect to panel models which perform best when including the COVID pandemic in the analysis.

% %% STATISTICAL SIGNIFICANCE
% We test for the statistical significance of the gains reported in Table \ref{tab:empirics_main} using the average Superior Predictive Ability test of \cite{quaedvlieg2021multi}. 
% This is a multi-horizon test that allows us to compare the superior accuracy of the proposed model against a benchmark across forecast horizons.
% Practitioners and policymakers can rely on this statistical procedure to obtain an overall measure of the quality of their nowcasts regardless of the horizon.
% For all weighting schemes and both model specifications, we observe that the proposed panel models attain significantly better nowcasts compared to the benchmark Euro area (aggregate) model at the 5\% confidence level.

\paragraph{The Big Four:} We test whether nowcasting the four largest Euro area countries---France, Germany, Italy, and Spain (i.e., the ``Big Four")---separately may give an advantage in producing more accurate predictions for the Euro area GDP. See also \cite{ashwin2021nowcasting, barbaglia2022testing, cascaldigarcia2021}, among others.
We compare the performance of the HetAR and pooled panel models, as well as the country-specific regressions as in the previous section. To compute the aggregated Euro area nowcasts we use weighting scheme $W^{(4)}$ modified to include only the Big Four.%\footnote{
%We also tried a specification where we replace nowcasts for the big-four countries using a four-country panel model. 
%That is, we produce nowcasts for these four countries by limiting the panels used in the previous section only to the four largest countries, augmenting the four-country panel with the regressors of all other smaller countries as common predictors.
%Results were largely the same and similar conclusions apply.
%}

\medskip

Table \ref{tab:bigfour_smallcountries} reports the results of the four-country models relative to the benchmark model appearing in the first row of Table \ref{tab:empirics_main_aggregate}. All point to a better performance of the model involving nowcasting only the largest European economies.  
This result is robust across horizons, model specifications and, interestingly, the full and pre-COVID samples. Indeed, panel data models with the Big Four attain a more accurate performance even when considering the pre-COVID sample, whereas that was not the case in Table \ref{tab:empirics_main_aggregate}. 
However, looking at the full sample, panel data models with all Euro area countries always outperform the Big Four models.
%% FULL SAMPLE --> best panel with EA19
Hence, rather than focusing only on the indicators from the largest European countries, our evidence suggests that there is a value-added in the nowcasting model for all EA-19 countries when looking at the full sample and including the COVID in our model.
%% pre-COVID --> best COUNTRY SPECIFIC REGRESSION WITH EA19
Looking at the pre-COVID sample, the Big Four models are outperformed at all horizons by the country-specific regressions using all Euro area countries reported in Table \ref{tab:empirics_main_aggregate}.
Overall, this suggests the best results are obtained with the inclusion of information from all EA-19 countries both in the full sample (with panel models) and pre-COVID (with country-specific regressions).

% \begin{itemize}
% 	\item Big four panels produce worse results for the full sample relative to panels in Table \ref{tab:empirics_main_aggregate}.
% 	\item Country specific regression with $W^{(4)}$ weights perform better prior the COVID than the Big four panels.
% \end{itemize}

\begin{table}[]
	\centering
	\begin{tabular}{rccc c ccc}
		\hline
		& 2-month & 1-month & EoQ & & 2-month & 1-month & EoQ \\
		
		%&  &  &  \\ 
		& \multicolumn{3}{c}{\it Full sample} &&\multicolumn{3}{c}{\it Pre-COVID}\\[0.2em]
		& \multicolumn{6}{c}{Panel A. \it Big four --- Pooled}\\
		$W^{(4)}$& 0.698 & 0.871 & 0.808$^{**}$ && 0.877 & 0.978 & 0.820 \\
		& \multicolumn{6}{c}{Panel B. \it Big four --- HetAR}\\
		$W^{(4)}$ & 0.692 & 0.743 & 0.734$^{**}$ && 0.810 & 1.025 & 0.884 \\
		% & \multicolumn{6}{c}{Panel B. \it Country-specific regressions}\\
		% $W^{(1)}$ & 0.743 & 0.984 & 0.933 && 3.229 & 1.576 & 1.800 \\
		% $W^{(2)}$ & 0.803 & 1.063 & 0.932 && 5.981 & 2.037 & 2.277 \\
		% $W^{(3)}$ & 0.777 & 0.996 & 0.870 && 1.035 & 0.557 & 0.755 \\
		% $W^{(4)}$ & 0.799 & 1.046 & 0.835 && 0.882 & 0.548 & 0.743 \\
		& \multicolumn{6}{c}{Panel C. \it Big four --- country-specific regressions}\\
		$W^{(4)}$ & 0.813 & 1.046 & 0.820$^{**}$ && 0.965 & 0.557 & 0.807$^{*}$ \\
		\hline
	\end{tabular}
	\medskip
	\caption{\small Nowcast comparison table --- root-mean-squared errors (RMSEs). 
			Nowcast horizons are 2- and 1-month ahead and end of the quarter (EoQ). 
			All entries are relative RMSEs to the first row of Table \ref{tab:empirics_main_aggregate}. Panel A and B display results for Pooled and HetAR panel regression models, respectively, based on France, Germany, Italy, and Spain (i.e.\ the ``Big Four") countries. 
            %Panel B reports results using country-specific regression nowcasts aggregated based on weighting schemes $W^{(1)}$-$W^{(4)}$. 
            Panel C displays results for the Big Four country-specific regressions aggregated based on weighting scheme $W^{(4)}$. 
            We use 5-fold cross-validation adjusted for panel data to compute both sg-LASSO tuning parameters, see the Online Appendix Section \ref{cv.details} for further details.
            Out-of-sample periods: 2016 Q1 - 2019 Q4 (pre-COVID) and 2016 Q1 - 2022 Q4 (Full sample). Significance: ** $p < 0.05$; * $p < 0.1$ with the equally-weighted multi-horizon superior predictive ability test of \cite{quaedvlieg2021multi}: asterisks next to EoQ indicate a significantly better performance with respect to the benchmark for the Full or Pre-COVID samples.
		\label{tab:bigfour_smallcountries}} 
\end{table}

\medskip
%% WHY EA19 in FULL SAMPLE?
To explain the relative improvement in nowcasting performance achieved by using the full panel of Euro area countries as opposed to just the Big Four, several factors come into play. First, from a modeling perspective, machine learning panel regression models are estimated more accurately when $N$, i.e.\ the number of countries, is large - see \cite{babii2022machinepanel}. 
More accurate parameter estimates lead to higher quality nowcasts. 
Second, the real-time flow of data and information varies among countries. For instance, Belgium, a nation falling outside the Big Four category, consistently boasts superior and more timely survey data, significantly contributing to the accuracy of Euro area GDP predictions, see  \cite{basselier2018nowcasting}. 
From an economic perspective, given Belgium's strong economic ties with Germany and France due to its geographical proximity, its economic news serves as a reliable signal for the broader economies, therefore influencing the overall Euro area GDP projection. 
By incorporating the entire panel of countries, we can effectively capture and harness these effects. 
The Online  Appendix provides additional insights about the drivers of the nowcasting performance and a country-level evaluation.

\paragraph{Nowcast aggregation and combination:} We now focus on how to aggregate and combine individual nowcasts, starting from the nowcasting performance of the four proposed weighting schemes. 
Looking at the full sample results of Table \ref{tab:empirics_main_aggregate}, we observe that $W^{(4)}$ achieves the lowest RMSEs for the pooled panel model. 
This result does not hold for the HetAR model, where the $W^{(3)}$ weighting scheme produces slightly more accurate forecasts than $W^{(4)}$.
Regarding country-specific regressions, $W^{(3)}$ and $W^{(4)}$ attain similar performance in both the full sample and the pre-COVID period, and substantially outperform the other weighting schemes.
Figure \ref{fig:weights} illustrates the four estimated weighting schemes, with $W^{(3)}$ and $W^{(4)}$ notably exhibiting denser characteristics than the other schemes. 
Consequently, nowcasts employing denser weighting schemes yield more accurate results, further reinforcing the argument in favor of utilizing the entire panel of European countries to improve nowcasting precision. In the Online Appendix Section \ref{appsec:weights} we further investigate the nowcasting performance across weighting schemes, where we compare the weights obtained by $W^{(4)}$ (i.e., projections on GDP) and by $W^{(3)}$ (i.e., proportion of GDP level).
Compared to $W^{(3)}$, we observe that $W^{(4)}$ assigns smaller weights to Germany and to a lesser extent the Netherlands, while it inflates the relative importance of some small- and medium-sized countries, namely Austria, Belgium and Luxembourg.
Although the size of these economies within the Euro area is relatively small, information about economic developments in those countries plays a relevant role in attaining more accurate nowcasts. 

\begin{figure}[]
	\centering
	\makebox[\textwidth][c]{\includegraphics[scale=0.4]{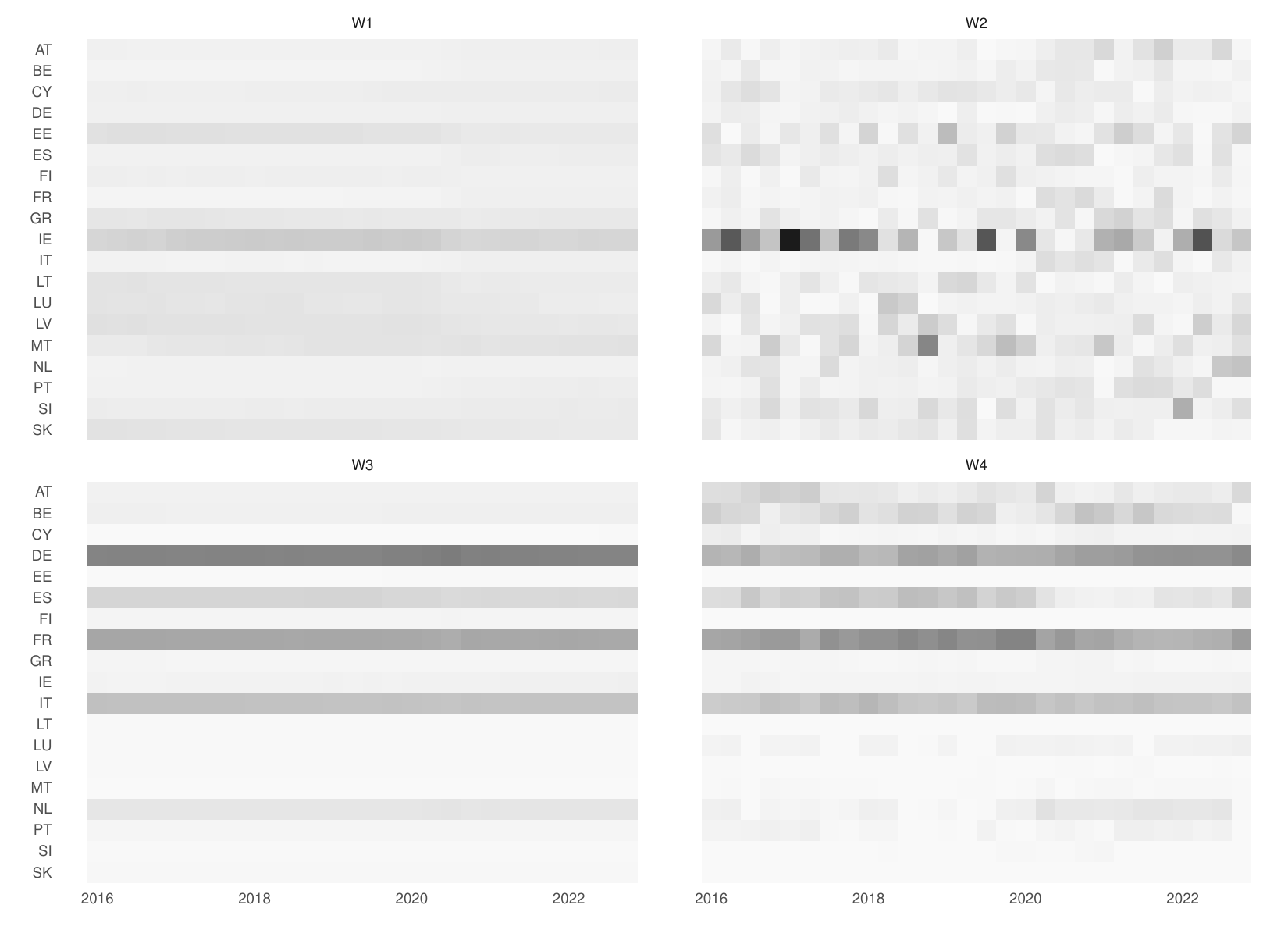}}%
	\caption{Estimated weights for the weighting schemes $W^{(1)}$-$W^{(4)}$ by country. 
    Each tile corresponds to a quarter. 
    The darker the color, the larger the weight.}
	\label{fig:weights}
\end{figure}
%% W4 -W3

%% Big4 vs Small19
Finally, we explore the relative importance of smaller countries against the Big Four by plotting in Figure \ref{fig:weights_small15big4} the sum of weights of the 15 small Euro area countries (\textit{Small15}) relative to the sum of the weights of the Big Four for weighting schemes $W^{(3)}$ (dashed line) and $W^{(4)}$ (solid line).
While the relative importance of smaller countries with respect to the Big Four is stable when looking at the $W^{(3)}$ weights (indeed, proportions of GDP level are steady in time and vary only at a slow pace), we observe large variability when considering $W^{(4)}$. 
Smaller countries are relatively more important in 2016-17 and, most notably, after 2020, when the total weight assigned to smaller countries doubles, going from 0.3 to approximately 0.6. 
The information coming from smaller Euro area countries plays an important role in attaining an accurate nowcast in the years following the COVID-19 pandemic, hence explaining the good performance of panel models on all EA-19 countries in the full sample reported in Table \ref{tab:empirics_main_aggregate}.

\begin{figure}[]
	\centering
	\makebox[\textwidth][c]{\includegraphics[scale=0.55]{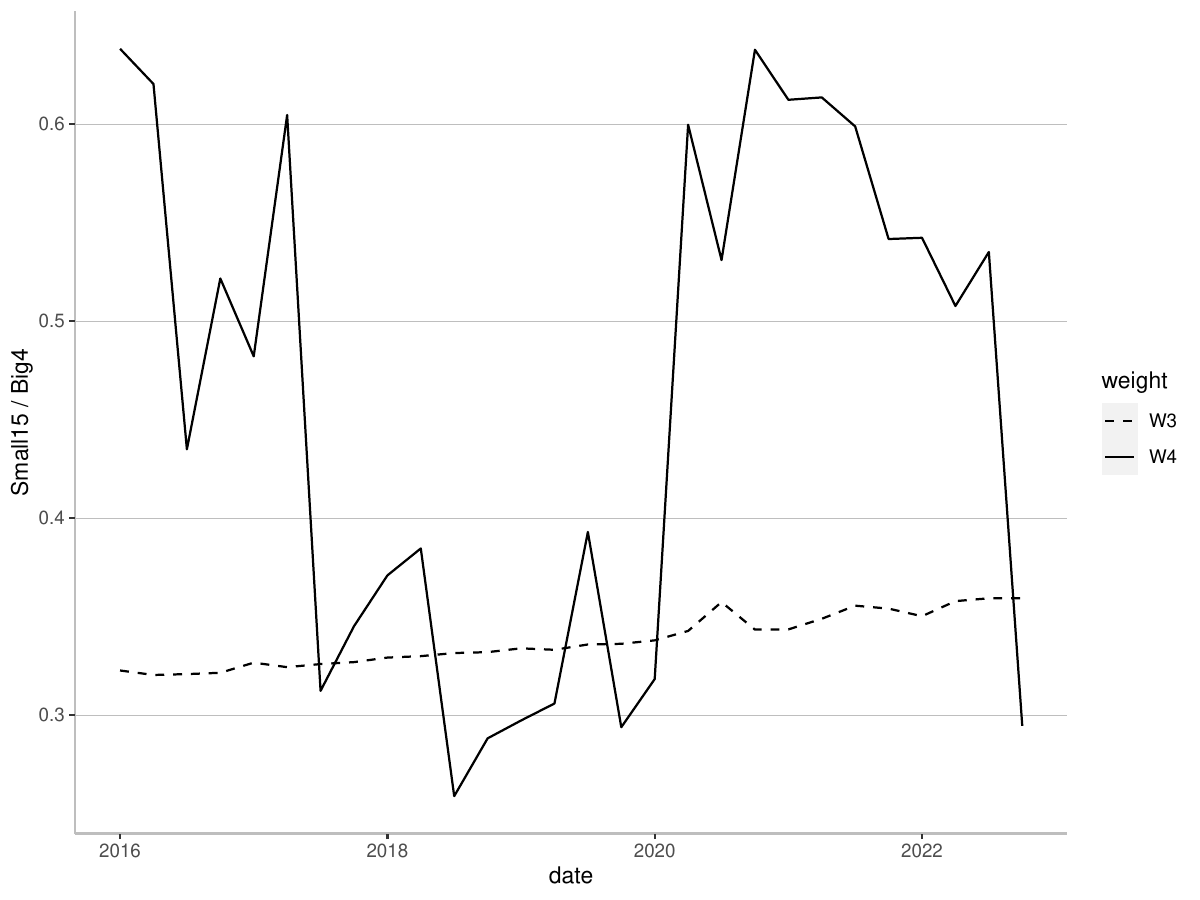}}%
	\caption{Sum of weights of the 15 small Euro area countries (\textit{Small15}) relative to the sum of the weights of the Big Four (\textit{Big4}) for weighting schemes $W^{(3)}$ and $W^{(4)}$.}
	\label{fig:weights_small15big4}
\end{figure}

%\medskip
%% COMBINATIONS
Overall, our results highlight a heterogeneous nowcast performance of the analyzed models: panel models perform best when considering the full sample and information from all Euro area countries, country-specific MIDAS regressions outperform all other models in the pre-COVID sample, while nowcasting models on the Big Four attain a good performance across all time samples, although not the best one.
Given this heterogeneity in performance, we combine forecasts obtained from individual models following the linear combination method by \cite{stock2004}.
In particular, we combine the forecasts obtained from the following models: the \textit{Pooled} panel model and \textit{Country-specific regressions} on all EA-19 countries (Panels B and D, respectively, of Table \ref{tab:empirics_main_aggregate}), and the \textit{HetAR} model on the Big Four (Panel B of Table \ref{tab:bigfour_smallcountries}). 
For all individual forecasts, we consider the aggregated forecasts obtained with $W^{(4)}$ weights.
We select forecasts aggregated with $W^{(4)}$ weights since they provide the best nowcasting performance overall.
Regarding the selection of the individual nowcasting models, we include all model types analyzed in the paper (\textit{Pooled} and \textit{HetAR} panels, as well as country-specific MIDAS regressions) taking into account the time period and information set (i.e., all EA-19 or only the Big Four countries) where they achieve their best performance. 
We have also experimented with other model subsets for which we obtained similar results.

\medskip
The nowcast combination results are reported in Table \ref{tab:combination}. We start with Panel A which covers models with news data. 
Combining the nowcasts consistently delivers large gains which range between a 20 to 30\% reduction with respect to the benchmark RMSEs. 
Moreover, the gains are robust across horizons and, importantly, full versus pre-COVID samples. Indeed, combinations attain more accurate nowcasts both in the full sample and in the pre-COVID period, reaching a performance that is close to the best-performing individual model. 
Panel B of Table \ref{tab:combination} explores one final aspect of our analysis, namely the added value of the news indicators. Compared to Panel A showing the nowcasting performance of models including news indicators, Panel B excludes them. 
The inclusion of news delivers nowcasting gains, even though these gains are not large, ranging between 1-5\%. Therefore,
although only marginally, the inclusion of news indicators positively impacts the accuracy of the nowcasts.

% \begin{itemize}
% 	\item Forecast combination performs very well during pre-COVID and post-COVID periods. 
% 	\item News appear to matter — results are based on averages of models, and it appears that news have an impact on all horizons both samples. 
% \end{itemize}

\begin{table}[h]
	\centering
	\begin{tabular}{rccc c ccc}
		\hline
		& 2-month & 1-month & EoQ & & 2-month & 1-month & EoQ \\
		
		%&  &  &  \\ 
		& \multicolumn{3}{c}{\it Full sample} &&\multicolumn{3}{c}{\it Pre-COVID}\\[0.2em]
		& \multicolumn{6}{c}{Panel A. \it Combination -- with news}\\
		$W^{(4)}$ & 0.693 & 0.815 & 0.717$^{**}$ && 0.831 & 0.770 & 0.733$^{**}$ \\
		& \multicolumn{6}{c}{Panel B. \it Combination -- without news}\\
		$W^{(4)}$  & 0.701 & 0.832 & 0.725$^{**}$ &&  0.890 & 0.801 & 0.773$^{*}$\\
		\hline
	\end{tabular}
	\medskip
	\caption{\small Nowcast comparison table --- root-mean-squared errors (RMSEs). 
		Nowcast horizons are 2- and 1-month ahead and end of the quarter (EoQ). 
		All entries are relative RMSEs to the first row of Table \ref{tab:empirics_main_aggregate}. Panel A reports results for forecast combination method combining models \textit{Pooled} (Table \ref{tab:empirics_main_aggregate} Panel B), \textit{Country-specific regressions} (Table \ref{tab:empirics_main_aggregate} Panel D), and
        \textit{Big four — HetAR} (Table \ref{tab:bigfour_smallcountries} Panel B) based on $W^{(4)}$ weights. 
        Panel B reports forecast combination results for the same models as in Panel A but excluding news data from the predictor variable set. We use 5-fold cross-validation adjusted for panel data to compute both sg-LASSO tuning parameters, see the Online Appendix Section \ref{cv.details} for further details. Out-of-sample periods: 2016 Q1 - 2019 Q4 (pre-COVID) and 2016 Q1 - 2022 Q4 (Full sample). Significance: ** $p < 0.05$; * $p < 0.1$ with the equally-weighted multi-horizon superior predictive ability test of \cite{quaedvlieg2021multi}: asterisks next to EoQ indicate a significantly better performance with respect to the benchmark for the Full or Pre-COVID samples.
        \label{tab:combination}} 
\end{table}

\section{Conclusion}
\label{sec:conclusion}

The paper studies the Euro area nowcasting using MIDAS machine learning panel data regression models. These models offer the flexibility needed to analyze extensive datasets gathered from diverse sources, sampled at varying frequencies, and available in both cross-sectional and time series dimensions. Through this research, we introduce several innovative empirical findings that carry significant relevance for policymakers. Our findings reveal a significant enhancement in the accuracy of nowcasts for the Euro area aggregate GDP when incorporating data from smaller countries. These improvements are substantial, reaching up to 30\%, and remain robust across nowcasting horizons. We attribute these gains to two primary factors.

\medskip

Firstly, in highly turbulent times such as the COVID-19 pandemic, the parameter estimates of the machine learning panel data regression models are estimated more precisely when a broader cross-sectional dataset is employed. 
Secondly, smaller countries, such as Austria, Belgium, or Luxembourg, in contrast to larger Euro-area countries like Germany or France, tend to possess higher-quality and more timely survey data. 
Given the high level of economic interconnectivity, especially among neighboring countries in the Euro area, the inclusion of data from these smaller nations results in more accurate signals, ultimately leading to enhanced model predictions.

\medskip

In addition to our primary findings, we also contribute to the expanding body of literature concerning the application of alternative data, such as information extracted from newspaper articles, to enhance the accuracy of economic forecasts. The incorporation of such data is particularly advantageous due to their real-time availability and the lack of any publication delay for these data. Our research demonstrates that news data improve the accuracy of Euro area GDP nowcasts. %This finding further underscores the heterogeneity of news signals across countries and underscores the value of utilizing a comprehensive panel of Euro area nations for this purpose.

\if0\blind
{
	\section*{Acknowledgments}
	We received helpful comments from Peter Reusens, Paolo Paruolo, Wouter Van der Veken, and Raf Wouters, as well as participants at a National Bank of Belgium seminar, the Nowcasting workshop at the Paris School of Economics, and the 
	5$^{th}$ Conference on ``Nontraditional Data, Machine Learning, and Natural Language Processing in Macroeconomics." 
	The views expressed are purely those of the authors and should not, in any circumstances, be regarded as stating an official position of the European Commission. 
	Eric Ghysels and Jonas Striaukas gratefully acknowledge the financial support of the National Bank of Belgium. 
	Jonas Striaukas also acknowledges the financial support of the F.R.S.\---FNRS PDR under project Nr. PDR T.0044.22.
} \fi

\if1\blind
{
	
} \fi

\clearpage
	
\bibliographystyle{econometrica}
\bibliography{midas_ml}

@article{aruoba2009real,
  title     = {Real-time measurement of business conditions},
  author    = {Aruoba, S Bora{\u{g}}an and Diebold, Francis X and Scotti, Chiara},
  journal   = {Journal of Business and Economic Statistics},
  volume    = {27},
  number    = {4},
  pages     = {417--427},
  year      = {2009},
  publisher = {Taylor and Francis}
}

@article{ashwin2021nowcasting,
  author  = {Ashwin, Julian and Kalamara, Eleni and Saiz, Lorena},
  title   = {Nowcasting Euro area GDP with news sentiment: A tale of two crises},
  journal = {Journal of Applied Econometrics},
  volume  = {39},
  number  = {5},
  pages   = {887-905},
  year    = {2024}
}

@article{babii2020inference,
  title     = {High-dimensional Granger causality tests with an application to VIX and news},
  author    = {Babii, Andrii and Ghysels, Eric and Striaukas, Jonas},
  journal   = {Journal of Financial Econometrics},
  volume    = {22},
  number    = {3},
  pages     = {605--635},
  year      = {2024},
  publisher = {Oxford University Press}
}

@article{babii2022machine,
  title     = {Machine learning time series regressions with an application to nowcasting},
  author    = {Babii, Andrii and Ghysels, Eric and Striaukas, Jonas},
  journal   = {Journal of Business and Economic Statistics},
  volume    = {40},
  issue     = {3},
  pages     = {1094--1106},
  year      = {2022},
  publisher = {Taylor and Francis}
}

@article{merlevede2011bernstein,
  title     = {A Bernstein type inequality and moderate deviations for weakly dependent sequences},
  author    = {Merlevede, Florence and Peligrad, Magda and Rio, Emmanuel},
  journal   = {Probability Theory and Related Fields},
  volume    = {151},
  number    = {3},
  pages     = {435--474},
  year      = {2011},
  publisher = {Springer}
}

@article{babii2022machinepanel,
  title     = {Machine learning panel data regressions with heavy-tailed dependent data: Theory and application},
  author    = {Babii, Andrii and Ball, Ryan T and Ghysels, Eric and Striaukas, Jonas},
  journal   = {Journal of Econometrics},
  volume    = {237},
  number    = {2C},
  pages     = {105--315},
  year      = {2023},
  publisher = {Elsevier}
}

@article{babii2022panelpe,
  title   = {Panel Data Nowcasting in a Data-Rich Environment: The Case of Price-Earnings Ratios},
  author  = {Babii, Andrii and Ball, Ryan T and Ghysels, Eric and Striaukas, Jonas},
  journal = {Journal of Applied Econometrics},
  volume  = {39},
  number  = {2},
  pages   = {292--307},
  year    = {2024}
}

@article{baker2016,
  author  = {Baker, Scott R. and Bloom, Nicholas and Davis, Steven J.},
  title   = {{Measuring Economic Policy Uncertainty*}},
  journal = {Quarterly Journal of Economics},
  volume  = {131},
  number  = {4},
  pages   = {1593-1636},
  year    = {2016}
}

@article{barbaglia2022forecastingEA,
  title   = {Forecasting GDP in Europe with textual data},
  author  = {Barbaglia, Luca and Consoli, Sergio and Manzan, Sebastiano},
  journal = {Journal of Applied Econometrics},
  volume  = {39},
  number  = {2},
  pages   = {338-355},
  year    = {2024}
}

@article{barbaglia2022forecastingUS,
  title     = {Forecasting with economic news},
  author    = {Barbaglia, Luca and Consoli, Sergio and Manzan, Sebastiano},
  journal   = {Journal of Business and Economic Statistics},
  volume    = {41},
  number    = {3},
  pages     = {708-719},
  year      = {2023},
  publisher = {Taylor and Francis}
}

@article{barbaglia2022testing,
  title     = {{Testing Big Data in a Big Crisis: Nowcasting under COVID-19}},
  author    = {Barbaglia, Luca and Frattarolo, Lorenzo and Onorante, Luca and Tiozzo Pezzoli, Luca and Pericoli, Filippo M and Ratto, Marco},
  journal   = {International Journal of Forecasting},
  volume    = {39},
  number    = {4},
  pages     = {1548--1563},
  year      = {2023},
  publisher = {Elsevier}
}

@article{basselier2018nowcasting,
  title     = {Nowcasting real economic activity in the euro area: Assessing the impact of qualitative surveys},
  author    = {Basselier, Ra{\"\i}sa and de Antonio Liedo, David and Langenus, Geert},
  journal   = {Journal of Business Cycle Research},
  volume    = {14},
  pages     = {1--46},
  year      = {2018},
  publisher = {Springer}
}

@article{boss2025nowcasting,
  title   = {Nowcasting the euro area with social media data},
  author  = {Boss, Konstantin and Longo, Luigi and Onorante, Luca},
  journal = {arXiv preprint arXiv:2506.10546},
  year    = {2025}
}

@article{cascaldigarcia2021,
  title   = {Back to the present: Learning about the euro area through a now-casting model},
  journal = {International Journal of Forecasting},
  year    = {2023},
  author  = {Danilo Cascaldi-Garcia and Thiago R.T. Ferreira and Domenico Giannone and Michele Modugno}
}

@article{chan2023high,
  title     = {High-dimensional conditionally Gaussian state space models with missing data},
  author    = {Chan, Joshua CC and Poon, Aubrey and Zhu, Dan},
  journal   = {Journal of Econometrics},
  volume    = {236},
  number    = {1},
  pages     = {105468},
  year      = {2023},
  publisher = {Elsevier}
}

@article{consoli2022fine,
  title     = {Fine-grained, aspect-based sentiment analysis on economic and financial lexicon},
  author    = {Consoli, Sergio and Barbaglia, Luca and Manzan, Sebastiano},
  journal   = {Knowledge-Based Systems},
  volume    = {247},
  pages     = {108781},
  year      = {2022},
  publisher = {Elsevier}
}

@article{ellingsen2022news,
  title     = {News media versus FRED-MD for macroeconomic forecasting},
  author    = {Ellingsen, Jon and Larsen, Vegard H and Thorsrud, Leif Anders},
  journal   = {Journal of Applied Econometrics},
  volume    = {37},
  number    = {1},
  pages     = {63--81},
  year      = {2022},
  publisher = {Wiley Online Library}
}

@article{foroni2015unrestricted,
  title     = {Unrestricted mixed data sampling (MIDAS): MIDAS regressions with unrestricted lag polynomials},
  author    = {Foroni, Claudia and Marcellino, Massimiliano and Schumacher, Christian},
  journal   = {Journal of the Royal Statistical Society: Series A (Statistics in Society)},
  volume    = {178},
  number    = {1},
  pages     = {57--82},
  year      = {2015},
  publisher = {Wiley Online Library}
}

@article{fosten2019panel,
  author  = {Jack Fosten and Ryan Greenaway-McGrevy},
  title   = {Panel data nowcasting},
  journal = {Econometric Reviews},
  volume  = {41},
  number  = {7},
  pages   = {675--696},
  year    = {2022}
}

@article{hendry2011combining,
  title     = {Combining disaggregate forecasts or combining disaggregate information to forecast an aggregate},
  author    = {Hendry, David F and Hubrich, Kirstin},
  journal   = {Journal of Business and Economic Statistics},
  volume    = {29},
  number    = {2},
  pages     = {216--227},
  year      = {2011},
  publisher = {Taylor and Francis}
}

@article{khalaf2020dynamic,
  title   = {Dynamic panels with MIDAS covariates: nonlinearity, estimation and fit},
  author  = {Khalaf, Lynda and Kichian, Maral and Saunders, Charles J and Voia, Marcel},
  journal = {Journal of Econometrics},
  volume  = {220},
  number  = {2},
  pages   = {589-605},
  year    = {2021}
}

@article{koop2020regional,
  title     = {Regional output growth in the United Kingdom: More timely and higher frequency estimates from 1970},
  author    = {Koop, Gary and McIntyre, Stuart and Mitchell, James and Poon, Aubrey},
  journal   = {Journal of Applied Econometrics},
  volume    = {35},
  number    = {2},
  pages     = {176--197},
  year      = {2020},
  publisher = {Wiley Online Library}
}

@article{koop2025monthly,
  title   = {Monthly {GDP} Growth Estimates for the {US} States},
  author  = {Koop, Gary and McIntyre, Stuart and Mitchell, James and Raftapostolos, Aristeidis},
  journal = {arXiv preprint arXiv:2501.04607},
  year    = {2025}
}

@article{larsen2021news,
  title     = {News-driven inflation expectations and information rigidities},
  author    = {Larsen, Vegard H and Thorsrud, Leif Anders and Zhulanova, Julia},
  journal   = {Journal of Monetary Economics},
  volume    = {117},
  pages     = {507--520},
  year      = {2021},
  publisher = {Elsevier}
}

@article{lutkepohl2006forecasting,
  title     = {Forecasting with VARMA models},
  author    = {L{\"u}tkepohl, Helmut},
  journal   = {Handbook of economic forecasting},
  volume    = {1},
  pages     = {287--325},
  year      = {2006},
  publisher = {Elsevier}
}

@article{marcellino2003macroeconomic,
  title     = {Macroeconomic forecasting in the euro area: Country specific versus area-wide information},
  author    = {Marcellino, Massimiliano and Stock, James H and Watson, Mark W},
  journal   = {European Economic Review},
  volume    = {47},
  number    = {1},
  pages     = {1--18},
  year      = {2003},
  publisher = {Elsevier}
}

@article{mariano2003new,
  title     = {A new coincident index of business cycles based on monthly and quarterly series},
  author    = {Mariano, Roberto S and Murasawa, Yasutomo},
  journal   = {Journal of applied Econometrics},
  volume    = {18},
  number    = {4},
  pages     = {427--443},
  year      = {2003},
  publisher = {Wiley Online Library}
}

@article{mariano2010coincident,
  title     = {A coincident index, common factors, and monthly real GDP},
  author    = {Mariano, Roberto S and Murasawa, Yasutomo},
  journal   = {Oxford Bulletin of economics and statistics},
  volume    = {72},
  number    = {1},
  pages     = {27--46},
  year      = {2010},
  publisher = {Wiley Online Library}
}

@inproceedings{Mohit2014,
  author    = {Mohit, Behrang},
  editor    = {Zitouni, Imed},
  title     = {Named Entity Recognition},
  booktitle = {Natural Language Processing of Semitic Languages},
  year      = {2014},
  publisher = {Springer Berlin Heidelberg}
}

@article{pesaran1989econometric,
  title   = {Econometric analysis of aggregation in the context of linear prediction models},
  author  = {Pesaran, M Hashem and Pierse, Richard G and Kumar, Mohan S},
  journal = {Econometrica},
  pages   = {861--888},
  year    = {1989}
}

@article{pesaran1995estimating,
  title     = {Estimating long-run relationships from dynamic heterogeneous panels},
  author    = {Pesaran, M Hashem and Smith, Ron},
  journal   = {Journal of Econometrics},
  volume    = {68},
  number    = {1},
  pages     = {79--113},
  year      = {1995},
  publisher = {Elsevier}
}

@article{pesaran2024forecasting,
  title   = {Forecasting with panel data: Estimation uncertainty versus parameter heterogeneity},
  author  = {Pesaran, M Hashem and Pick, Andreas and Timmermann, Allan},
  journal = {arXiv preprint arXiv:2404.11198},
  year    = {2024}
}

@article{quaedvlieg2021multi,
  title     = {Multi-horizon forecast comparison},
  author    = {Quaedvlieg, Rogier},
  journal   = {Journal of Business and Economic Statistics},
  volume    = {39},
  number    = {1},
  pages     = {40--53},
  year      = {2021},
  publisher = {Taylor and Francis}
}

@article{schorfheide2015real,
  title     = {Real-time forecasting with a mixed-frequency VAR},
  author    = {Schorfheide, Frank and Song, Dongho},
  journal   = {Journal of Business and Economic Statistics},
  volume    = {33},
  number    = {3},
  pages     = {366--380},
  year      = {2015},
  publisher = {Taylor and Francis}
}

@article{SCOTTI20161,
  title   = {Surprise and uncertainty indexes: Real-time aggregation of real-activity macro-surprises},
  journal = {Journal of Monetary Economics},
  volume  = {82},
  pages   = {1-19},
  year    = {2016},
  author  = {Chiara Scotti}
}

@article{stock2004,
  author  = {Stock, James H. and Watson, Mark W.},
  title   = {Combination forecasts of output growth in a seven-country data set},
  journal = {Journal of Forecasting},
  volume  = {23},
  number  = {6},
  pages   = {405-430},
  year    = {2004}
}

@article{thorsrud2018words,
  title   = {Words are the new numbers: A newsy coincident index of the business cycle},
  author  = {Thorsrud, Leif Anders},
  journal = {Journal of Business and Economic Statistics},
  volume  = {38},
  number  = {2},
  pages   = {393-409},
  year    = {2020}
}

\clearpage

\setcounter{page}{1}
\setcounter{section}{0}
\setcounter{equation}{0}
\setcounter{table}{0}
\setcounter{figure}{0}
\renewcommand{\thesection}{S.\arabic{section}}
\renewcommand{\theequation}{S.\arabic{equation}}
\renewcommand\thetable{S.\arabic{table}}
\renewcommand\thefigure{S.\arabic{figure}}
\renewcommand\thepage{S. - \arabic{page}}
\renewcommand\thealgorithm{S.\arabic{algorithm}}

\bigskip
\begin{center}
{\large\bf SUPPLEMENTARY MATERIAL}
\end{center}

\newpage
\section{Data}

\begin{center}
	\begin{table}[htp]
		{\tiny
			\makebox[15cm]{
				\centering
				\begin{tabular}{rlllll}
					\hline
					& Description & Units & Freq. & Source & Individ. \\ 
					\hline
					1 & Gross Domestic Product & SWDA, Mil.Ch.2015.EUR & Q & Eurostat & $\checkmark$ \\ 
					2 & Long Term Government Bond Yield & Avg. (\%) & M & European Central Bank & $\checkmark$ \\ 
					3 & Harmonized Consumer Price Index & SA, 2015=100 & M & Eurostat/Haver Analytics & $\checkmark$ \\ 
					4 & HICP: Total excl. Energy \& Unprocessed Food & SA, 2015=100 & M & Eurostat/Haver Analytics & $\checkmark$ \\ 
					5 & PPI: Industry excl. Construction& NSA, 2015=100& M & Eurostat & $\checkmark$ \\ 
					6 & Exports: Volume Indices & SWDA, 2015=100 & M & Eurostat & $\checkmark$ \\ 
					7 & Imports: Volume Indices & SWDA, 2015=100 & M & Eurostat & $\checkmark$ \\ 
					8 & Industrial Production: Industry excl. Construction & SWDA, 2015=100 & M & Eurostat & $\checkmark$ \\ 
					9 & Industrial Production: Manufacturing & SWDA, 2015=100 & M & Eurostat & $\checkmark$ \\ 
					10 & IP: Electricity, Gas, Steam \& Air Cond & SWDA, 2015=100 & M & Eurostat & $\checkmark$ \\ 
					11 & Unemployment Rate & SA, \% & M & Eurostat & $\checkmark$ \\ 
					12 & Industrial Turnover: Manufacturing & SWDA, 2015=100 & M & Eurostat & $\checkmark$ \\ 
					13 & Industrial Turnover: Intermediate Goods & SWDA, 2015=100 & M & Eurostat & $\checkmark$ \\ 
					14 & Industrial Turnover: Capital Goods & SWDA, 2015=100 & M & Eurostat & $\checkmark$ \\ 
					15 & Industrial Turnover: Consumer Goods & SWDA, 2015=100 & M & Eurostat & $\checkmark$ \\ 
					16 & Retail Trade Volume excl. Autos \& Motorcycles & SWDA,2015=100 & M & Eurostat & $\checkmark$ \\ 
					17 & Balance: with EA & SWDA, Mil.EUR & M & Eurostat & $\checkmark$ \\ 
					18 & Balance: with World & SWDA, Mil.EUR & M & Eurostat & $\checkmark$ \\ 
					19 & Industrial Confidence Indicator, Percent Balance & SA, \% & M & European Commission & $\checkmark$ \\ 
					20 & Industry: Production Expectations:, Percent Balance & SA, \% & M & European Commission & $\checkmark$ \\ 
					21 & Industry: Volume of Order Books, Percent Balance & SA, \% & M & European Commission & $\checkmark$ \\ 
					22 & Industry: Stocks of Finished Products, Percent Balance & SA, \% & M & European Commission & $\checkmark$ \\ 
					23 & Industry: Volume of excl.port Order Books, Percent Balance & SA, \% & M & European Commission & $\checkmark$ \\ 
					24 & Industry: Selling Price Expectations:, Percent Balance & SA, \% & M & European Commission & $\checkmark$ \\ 
					25 & Industry: Production Trend in Recent Months, Percent Balance & SA, \% & M & European Commission & $\checkmark$ \\ 
					26 & Services Confidence Indicator & SA, \% & M & European Commission & $\checkmark$ \\ 
					27 & Consumer Confidence Indicator, Percent Balance & SA, \% & M & European Commission & $\checkmark$ \\ 
					28 & Consumer: Financial Situation last 12 Months, Percent Balance & SA, \% & M & European Commission & $\checkmark$ \\ 
					29 & Consumer: Financial Situation next 12 Months, Percent Balance & SA, \% & M & European Commission & $\checkmark$ \\ 
					30 & Consumer: Gen Economic Situation next 12 Mo, Percent Balance & SA, \% & M & European Commission & $\checkmark$ \\ 
					31 & Consumer: Major Purchases over next 12 Months, Pct Balance & SA, \% & M & European Commission & $\checkmark$ \\ 
					32 & HH Fin Situation: Sample Total, Balance & SA, \% & M & European Commission & $\checkmark$ \\ 
					33 & Consumer: Major Purchases at Present, Percent Balance & SA, \% & M & European Commission & $\checkmark$ \\ 
					34 & Consumer: Unemployment Expectations: next 12 Mo, Pct Balance & SA, \% & M & European Commission & $\checkmark$ \\ 
					35 & Construction Confidence Indicator, Percent Balance & SA, \% & M & European Commission & $\checkmark$ \\ 
					36 & Retail Trade Confidence Indicator, Percent Balance & SA, \% & M & European Commission & $\checkmark$ \\ 
					37 & Retail: Present Business Situation, Percent Balance & SA, \% & M & European Commission & $\checkmark$ \\ 
					38 & Retail: Volume of Stocks, Percent Balance & SA, \% & M & European Commission & $\checkmark$ \\ 
					39 & Retail: excl.pected Business Situation, Percent Balance & SA, \% & M & European Commission & $\checkmark$ \\ 
					40 & Retail: Orders Placed with Suppliers, Percent Balance & SA, \% & M & European Commission & $\checkmark$ \\ 
					41 & Industry: Employment Expectations:, Percent Balance & SA, \% & M & European Commission & $\checkmark$ \\ 
					42 & Constr: Factors Limiting Bldg Activity: None & SA, \% & M & European Commission & $\checkmark$ \\ 
					43 & Constr: Factors Limiting Bldg Activity: Demand & SA, \% & M & European Commission & $\checkmark$ \\ 
					44 & Constr: Limits to Bldg Activity: Weather & SA, \% & M & European Commission & $\checkmark$ \\ 
					45 & Constr:Limits to Bldg Activity:Labor Shortage & SA, \% & M & European Commission & $\checkmark$ \\ 
					46 & Constr: Limits to Bldg Activity: Eqpt Shortage & SA, \% & M & European Commission & $\checkmark$ \\ 
					47 & Constr: Limits to Bldg Activity: OthFactors & SA, \% & M & European Commission & $\checkmark$ \\ 
					48 & Euro Real Effective Exch. Rate [EER-42], CPI Deflated & NSA, Q1-99=100 & M & European Central Bank & $\checkmark$ \\ 
					49 & Trade Balance: Total & SWDA, Mil.EUR & M & Eurostat &  \\ 
					50 & Domestic PPI: Industry excl. Construction & NSA, 2015=100& M & Eurostat &  \\ 
					51 & PPI: Industry excl. Construction& NSA, 2015=100& M & Eurostat &  \\ 
					52 & Import Prices [Non EA]: Mining, Mfg \& Energy& NSA, 2015=100& M & European Central Bank &  \\ 
					53 & ECB Money Supply: M1: Annual Growth Rate & SWDA, \% & M & European Central Bank &  \\ 
					54 & Money Supply: Credit to Euro Area Residents: Loans & NSA, \%Y/Y & M & European Central Bank &  \\ 
					55 & ZEW Economic Sentiment, Current Macroeconomic Conditions & NSA, \% & M & ZEW &  \\ 
					56 & ZEW Economic Sentiment, Macro Expectations: [Next 6 Mos] & NSA, \% & M & ZEW &  \\ 
					57 & Business Climate Indicator & SA, std. dev. pts & M & European Commission &  \\ 
					58 & Euro-super 95 Pump Price excl. duties and taxes & EUR/1000L & W & European Commission & $\checkmark$ \\ 
					59 & Auto Gas Oil [Diesel] Pump Price excl. duties and taxes & EUR/1000 & W & European Commission & $\checkmark$ \\ 
					60 & Heating Gas Oil Consumer Price excl. duties and taxes& EUR/1000 & W & European Commission & $\checkmark$ \\ 
					61 & Euro STOXX 50 Index & & D & Thomson Reuters -- EIKON &  \\ 
					62 & Euro/US Dollar FX Spot Rate & & D & Thomson Reuters -- EIKON &  \\ 
					63 & Gold/Euro FX Cross Rate & & D & Thomson Reuters -- EIKON &  \\ 
					64 & Euro 3 Month Euribor & & D & Thomson Reuters -- EIKON &  \\ 
					\hline
				\end{tabular}
			} 
			\caption{\scriptsize Data description table — macro, survey, oil products and financial data. Column \textit{Description} provides a short description of each series, column \textit{Units} provides units of each series, column \textit{Freq.} is the sampling frequency, column \textit{Source} provides the original data source and \textit{Individ.} indicates whether the series is available at the country level ($\checkmark$) or not. We use Haver analytics vintage data in which the real-time data starts at 2015 January.\label{app:macro_data}} 
		}
	\end{table}
\end{center}

\begin{center}
	\begin{table}[ht]
		{\small
			\centering
			\begin{tabular}{rl}
				\hline
				Series & Countries\\ 
				\hline
				Industrial Turnover: Manufacturing SWDA, 2015=100 & Ireland  \\
				Industrial Turnover: Intermediate Goods SWDA, 2015=100 & Ireland  \\
				Industrial Turnover: Capital Goods SWDA, 2015=100 & Ireland \\
				Industrial Turnover: Consumer Goods SWDA, 2015=100 & Ireland \\
				New Motor Vehicle Registrations NSA, Units) & Cyprus \& Malta \\
				Services Confidence Indicator SA, \% Balance) & Luxembourg \\
				Retail Trade Confidence Indicator, Percent Balance SA, \% & Luxembourg \\
				Industry: Employment Expectations SA, \% & Luxembourg \\
				Constr: Factors Limiting Bldg Activity: None SA, \% & Ireland \& Malta \\
				Constr: Factors Limiting Bldg Activity: Demand SA, \% & Ireland \& Malta \\
				Constr: Limits to Bldg Activity: Weather SA, \% & Ireland \& Malta \\
				Constr: Limits to Bldg Activity: Labor Shortage SA, \% & Ireland \& Malta \\
				Constr: Limits to Bldg Activity: Equipment Shortage SA, \% & Ireland \& Malta \\
				Constr: Limits to Bldg Activity: Other Factors SA, \% & Ireland \& Malta \\
				Constr:Limits to Bldg Activity: Financial Constraints SA, \% & Ireland \& Malta \\
				\hline
			\end{tabular}
		} 
		\caption{This table reports the series that are missing for some countries. We replace those series with the Euro area data.} 
	\end{table}
\end{center}

\newpage
\section{News indicators}

We construct news-based indicators following the fine-grained aspect-based  sentiment (FiGAS) by \cite{consoli2022fine}.
This method provides sentence-level sentiment scores  for textual information in the English language. 
FiGAS has two main characteristics. 
First, it is \textit{aspect-based}, that is, it computes a sentiment score about a specific topic, rather than the overall sentiment of all sentences in a text.
Second, it is \textit{fine-grained}, meaning that words are assigned a sentiment score in [-1,+1] taken from a dictionary developed for economic and financial applications.

\medskip
If a sentence contains a direct mention of the topic of interest, then FiGAS tags each word in the sentence with its part-pf-speech and its dependency. 
Based on these tags, the algorithm checks whether the sentence corresponds to one of the linguistic rules mapped in FiGAS, which aims to capture semantic constructions that characterize the topic of interest. 
Examples of rules are the presence of an adjectival modifier or clause, or the topic of interest being followed by an object predicate.
We refer to \cite{consoli2022fine} for a detailed description of the method. 

\medskip
%% List of topics and keywords
In our application, we compute news-based indicators for three topics covering different aspects of economic and financial activity:  \textit{economy}, \textit{employment} and \textit{inflation}.
%\footnote{The choice of the topics is similar to \cite{colagrossi2022tracking}. 
%In addition, we include the \textit{coronavirus} topic which aims at capturing the evolution of the COVID-19 pandemic across the Euro Area countries.
We associate a list of additional keywords to each topic to better represent the news coverage of that economic concept as detailed in  Table \ref{tab:topics}. The selection of the keywords starts form the list of related terms extracted from the global knowledge graph from the Global Data set of Events, Language and Tone (GDELT) and it cuts off the most frequent terms based on their presence in GDELT news items. Finally, we subset the list of additional keywords by selecting only the terms that share the same polarity: for instance, we do not include in the same topic a term that has a positive connotation (e.g., employment) with a term with a negative one (e.g., unemployment). 

\begin{table}
\centering
%    \begin{tabularx}{1\textwidth}{c|X}
	\begin{tabular}{ll}
  \hline
topic & keywords \\ 
  \hline
%capital market & capital market, finance, financial market, financial sector, stock price, stock, equity, bond, credit, derivative, stock market, equity market, bond market, credit market, derivative market, bank, banking, return, dividend, earning, loan \\ 
  %coronavirus & coronavirus, corona virus, covid19, covid-19, sarscov, sarscov2, sars-cov, sars-cov-2, pandemic, epidemic \\
  economy & economy, gdp,  gross domestic product, economic development, \\ & economic power, economic growth, economic output, national income, revenue, \\ & investment \\ 
  employment & employment, wage, career, income, job creation, job growth, job opportunity, \\
  & labor market, labor supply, labor market, labor supply, workforce, vacancy \\ 
  %green economy & green economy, green growth, sustainable growth, climate finance, renewable energy, green tech, green energy, clean energy, green technology, alternative energy, renewable technology, renewable, green growth \\ 
  %green economy & green economy, green growth, sustainable growth, climate finance, renewable energy, green tech, green energy, clean energy, green technology, alternative energy, renewable technology, renewable, green growth, solar power, hydrogen power, eolian power, wind power, geothermal power, thermal power, solar energy, hydrogen energy, eolian energy, wind energy, geothermal energy, thermal energy, bio-gas, bio-fuel, biogas, biofuel \\
  %housing & house, real estate, property, home, apartment, building, construction, residence, residential, housing, rental, real estate market,  property market, construction  market, residence market, residential market, housing market, rental market, real estate sector,  property sector, construction sector, residence sector, residential sector, housing sector, rental sector \\ 
  inflation & inflation, inflationary, price level \\ 
  %manufacturing & industrial production, manufacturing, industrial policy, industry policy, supply chain, industry, manufacture \\ 
  %monetary policy & monetary policy, interest rate, fiscal policy, central bank, loan rate, money supply, monetary supply \\ 
  %trade & trade,  trade policy, trade finance, foreign investment, foreign invest, export, import, trade integration, trade agreement, trade deal, free trade \\ 
\hline
\end{tabular}
%\end{tabularx}
\caption{List of keywords included in each topic.}
\label{tab:topics}
\end{table}

\medskip
%% List of locations
The news-based indicators are country-specific. We rely on one additional feature of FiGAS which allows filtering sentences with a direct mention of a geographic location. 
FiGAS uses named-entity recognition to identify entities in the full-text \cite{Mohit2014}. 
If it identifies an entity, it assigns the most relevant location to the whole text, unless the sentence mentions directly another location. 
Table \ref{tab:locations} reports the list of keywords that we employ to filter the location for each Euro Area country.

\begin{table}[ht]
\centering
{\small
\begin{tabular}{ll}
%\begin{tabularx}{\textwidth}{c|X}
  \hline
country & keywords \\ 
  \hline
AT & austria, austrian, vienna \\ 
  BE & belgium, belgian, belgica, brussels, antwerp, european capital, eu capital, wallonia, flanders \\ 
  CY & cyprus, cypriot, cypriote, nicosia \\ 
  DE & germany, german, deutschland, deutsches, berlin, frankfurt \\ 
  EE & estonia, estonian, tallinn \\ 
  GR & greece, greek, hellenic, athens \\ 
  ES & spain, espana, spanish, madrid, barcelona, catalonia, hispania, hiberia, hiberian \\ 
  FI & finland, finn, finnish, helsinki \\ 
  FR & france, french, paris, francia \\ 
  IE & eire, ire, republic of ireland, ireland, irish, dublin \\ 
  IT & italy, italian, italic, italia, rome, milan \\ 
  LT & lithuania, lithuanian, vilnius \\ 
  LU & luxembourg, luxemburger, luxembourg-ville \\ 
  LV & latvia, latvian, riga \\ 
  MT & malta, Maltese, valletta \\ 
  NL & netherlands, holland, amsterdam, rotterdam, den haag, dutch, hollander \\ 
  PT & portugal, portuguese, lisbon \\ 
  SI & slovenia, slovenian, ljubljana \\ 
  SK & slovakia, slovak, slovakian, bratislava \\
   \hline
%\end{tabularx}
\end{tabular}}
\caption{Euro Area 19 countries (ISO2 codes) and additional keywords for location filtering.}
\label{tab:locations}
\end{table}

\medskip
%% Examples
We now report two examples to show how FiGAS works. 
Consider the sentence ``the Italian economy is expected to grow more than 6\% this year after the record slump recorded in 2020" that appeared in Reuters News on December 22$^{nd}$ 2021. 
FiGAS identifies one particular linguistic rule, more specifically, this sentence consists of the topic of interest (i.e., the Italian economy) connected to a verb (i.e., is expected) followed by an open clausal complement (i.e., to grow).
Furthermore, the algorithm detects a specific mention of a location and assigns it to Italy. 
The sentence-level sentiment score is 0.6, representing the positive outlook for the Italian economy. 
Now consider this sentence ``the mighty automobile industry increased its output by 1.9\%" that was published in Reuters News on September 7$^{th}$ 2021. 
FiGAS identifies the combination of two linguistic rules that characterize the topic of interest (i.e., the automobile industry): in detail, the topic of interest is associated with two adjectival modifiers, either in the form of an adjective (i.e., mighty) or as a verb (i.e., increased).
Given that there is no direct mention of any location in this sentence, FiGAS assigns the most relevant location detected in the full article, which is Germany.
The overall sentence-level sentiment score is 0.49 and captures the positive connotation in which the text is describing the German automobile industry. 

\medskip
%% Show the sentiment  measures
For each country and topic, we obtain two daily news-based indicators for \textit{sentiment} and \textit{volume}. 
The first one is obtained by summing the sentence-level sentiment scores, while the second one counts the number of sentences that discuss that topic in a day. 
As an example, Figure \ref{fig:sent_economy} plots the \textit{economy} sentiment indicators for all Euro Area 19 countries. 
The time series are aggregated monthly by summing and smoothed with a yearly rolling moving average.
There is some clear heterogeneity across the countries showing that each indicator captures country-specific economic developments.
In most cases, we notice a good fit with the business cycle fluctuations: for all major economies in the Euro Area, the \textit{economy} sentiment deteriorates clearly in correspondence with the three recessionary periods in our sample.

\begin{figure}[h]
    \centering
    \includegraphics[scale=0.55]{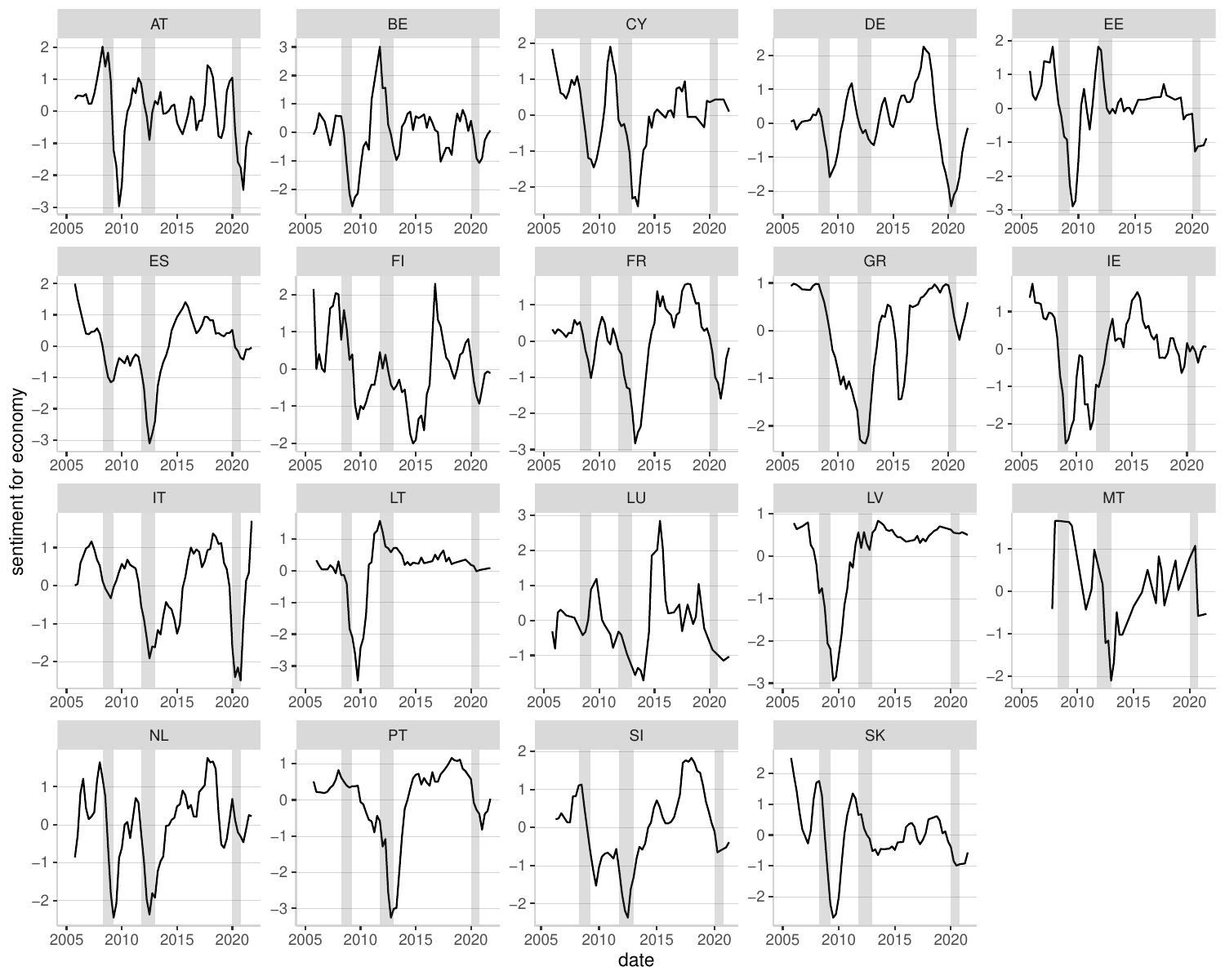}
    \caption{Sentiment indicator for \textit{economy} for the Euro Area 19 countries. The quarterly time series are smoothed with a yearly rolling moving average. The grey shadowed areas represent the recessionary periods as indicated by the Euro Area business cycle chronology.}
    \label{fig:sent_economy}
\end{figure}

\newpage

\section{Mixed-frequency panel data machine learning nowcasting models\label{sec:method}}
%Our method builds on \cite{babii2022machine}, who introduced machine learning MIDAS regressions with an application to single series nowcasting, and \cite{babii2022machinepanel}, who extended the method to panel data settings. Moreover, \cite{babii2022panelpe} consider an application of nowcasting price-earnings ratios for a large set of US firms. 

%\medskip

In Section \ref{sec:theory} of the paper we simplified the notation by ignoring the mixed frequency nature of the data. Here we introduce the MIDAS panel data models used in the empirical nowcasting study. More specifically, we work with panel data $(y_{i,t},x_{i,t})\in\mathbb{R}\times\mathbb{R}^p$, where $t=1,2,\dots,T$ denotes time and $i=1,2,\dots,N$ is the cross-section. The vector of covariates $x_{i,t}$ can include $1$ for the intercept, the lags of $y_{i,t}$, and the (higher-frequency observations of) covariates. 

\medskip

Our goal is to nowcast a set of variables $y_{i,t+h}$ with $i\in[N]$ at horizon $h$ measured at some low frequency, e.g.\ quarterly, $t\in[T]$, where $[p]=\{1,2,\dots,p\}$ for a positive integer $p$. For simplicity, we assume equally spaced data at different frequencies. In particular, $n_k^H$ denotes the total number of high-frequency observations for the $k^{\rm th}$ covariate for each low-frequency period $t$, and $n^L_k$ is the number of low-frequency periods used as lags.\footnote{For example, in our application a quarter of high-frequency lags used as covariates corresponds to $n^L_k=1$ while $n_k^H=3$ indicates that three months of data are used in each quarter. 
	Note that we can have  a mix of quarterly, monthly, and weekly data, and for each covariate indexed by $k$, $n_k^H$ represents different high-frequency sampling frequencies and associated lags $n_k^L n_k^H.$} The information set consists of $K$ predictors, i.e.\ 
$\left\{x_{i,t-j/n_k^H,k}:\;i\in[N],t\in[T],j=0,\dots,n^L_kn_k^H-1,k\in[K]\right\},$ measured potentially at some higher frequency, e.g., monthly/weekly/daily with real-time updates into the quarter.

\medskip

Consider the following panel data regression for the low-frequency panel target $y_{i,t|\tau},$ using information up to $\tau$:
\begin{equation*}
	y_{i,t|\tau} = \alpha + \sum_{q=1}^Q \rho_{i,q} y_{i,t-q} + \sum_{k=1}^{K}\psi(L^{1/n_k^H};\beta_k)x_{i,\tau,k} + u_{i,t|\tau},
\end{equation*}
where
\begin{equation}\label{eq:hf_lag}
	\psi(L^{1/n_k^H};\beta_k)x_{i,\tau,k} = \frac{1}{k_{\rm max}}\sum_{j=0}^{k_{\rm max}-1}\beta_{j,k}L^{j/n_k^H}x_{i,\tau-j/n_k^H,k},
\end{equation}
and	(a) $\alpha$ is the intercept constant across all $i$; (b) $\rho_{i,q}$ are autoregressive lag coefficients possibly different across $i$; (c) $k_{\rm max}$ is the maximum lag length which could depend on the covariate $k,$ and for each high-frequency covariate $x_{i,\tau,k}$ we have the most recent information at time $\tau.$\footnote{All parameters depend on $\tau$, but we suppress this detail to simplify notation. Furthermore, we will not keep track of the dependence of $x_t$, $k_{\max}$, and $K$ on $\tau$.} 
Note that this also implies the most recent vintage of data is used, i.e.\ past data revisions are incorporated as well. When $\tau$ $\leq$ $t - 1,$ we are dealing with a forecasting situation, hence, our analysis applies to forecasting as well.
Note also that $k_{\max}$ is the maximum lag length which may depend on the covariate $k,$ and for each high-frequency covariate $x_{i,\tau,k}$ we have the most up-to-date information available at time $\tau.$ For some high-frequency regressors that have not been updated yet, this could be stale information; see \cite{babii2022panelpe} for further discussion. In our empirical analysis, we consider the following two variations of this general formulation. For $i,j\in[N]$: (a) Pooled panel, denoted {\it pooled},  restriction: $\rho_{i,j} = \rho_j$. (b) Country-specific AR lag coefficients, denoted {\it HetAR}.
%In the pooled panel data case, the sample size is $NT$ which is much larger than in the case of a single-country model. Therefore, we may expect a more accurate estimation compared to individual time series regressions at the expense of losing heterogeneity. 
Note that in the HetAR case, we model country-specific effects, therefore we investigate whether more flexible models can improve the quality of nowcasts relative to the pooled panel model by including country-specific autoregressive coefficients in the model. Lastly, in the pooled, we group pooled autoregressive lags in time series dimension rather than the cross-section.

\medskip

The large number of predictors $K$ with potentially large number of high-frequency measurements $k_{\max}$ can be a rich source of predictive information, yet at the same time, estimating $N\times Q + 1 + \sum_{k=1}^K k_{\max}$ parameters, where $N$ is the size of the cross-section, $Q$ is number of autoregressive lags (constant across $i$) and $k_{\max}$ is the max of high-frequency lags for each covariate (constant across $i$), is  costly and may reduce the predictive performance in small samples. In addition, observing predictors at different frequencies leads to the frequency mismatch problem due to the missing data.
To solve both problems, we follow the MIDAS (machine learning) literature and instead of estimating a large number of individual slope coefficients, we consider a weight function\footnote{We use $q=1$ for nowcasting without low-frequency lags. More generally, $q-1$ denotes the number of low-frequency lags.} $\omega:[0,q]\to\R$, parameterized by $\beta_k\in\R^L$
\begin{equation*}
	\psi(L^{1/n_k^H};\beta_k)x_{i,\tau,k} = 
	\frac{1}{k_{\max}}\sum_{j=0}^{k_{\max}-1}\omega\left(\frac{j}{n_k^H};\beta_k\right)x_{i,\tau,k},
\end{equation*}
where $[0,q]\ni s\mapsto$ $\omega(s;\beta_k)$ = $\sum_{l=0}^{L-1}\beta_{l,k}w_l(s),$
and $(w_l)_{l\geq 0}$ is a set of $L$ functions, called the \textit{dictionary}. One can use either splines or Legendre polynomials as a dictionary; see \cite{babii2022machinepanel} for further discussion on different choices of the dictionary. The attractive feature of this approach is that we can map the MIDAS regression to the simple linear regression framework (as in Section \ref{sec:theory} of the paper). To that end, assuming that $k_{\max}$ is the same for all covariates and  $\mathbf{x}_i = (X_{i,1}W_k,\dots,X_{i,K}W_k)$, where for each $k\in[K]$, $X_{i,k} = \{x_{i,\tau-j/n_k^H,k},j = 0, \ldots, {k}_{\max} - 1\}_{\tau\in[T]}$ is a $T\times {k}_{\max}$ matrix of predictors and $W_k{k}_{\max} $ = $(w_l(j/n_k^H; \beta_k))_{0\leq l\leq L-1, 0 \leq j\leq {k}_{\max}}$ is a ${k}_{\max} \times L$ matrix corresponding to the dictionary. Note that the matrices $(W_k)_{k\in[K]}$ solve the frequency mismatch problem.

\subsection{A note on cross-validation}\label{cv.details}

Say the training and validation set for a given nowcast is $1,\dots,N$ in the cross-sectional dimension and $1,\dots, T_0$ in the time dimension. To select the tuning parameters of the sparse group LASSO, we use 5-fold cross-validation adapted to the panel setting. Specifically, the time dimension is partitioned into 5 contiguous folds, and for each fold the model is re-estimated on the remaining periods and validated on the excluded fold. The fold structure is replicated across all cross-sectional units so that entire time blocks, rather than individual observations, are left out. This ensures that dependence across countries at a given date is preserved and avoids information leakage across folds. The cross-validation error is then computed for each candidate penalty parameter, i.e, ($\gamma$, $\lambda$), and the optimal value is chosen for the minimum cross validation error as is standard in cross-validation implementations. We set the grid for the $\gamma = (0, 0.05, 0.1, \dots, 0.95, 1)$ while the grid for $\lambda$ is based on the usual path construction for the LASSO, i.e., for a given $\gamma$, we find the largest $\lambda$ such that the parameter vector is zero. We then construct the grid for $\lambda$ between the largest and the $\lambda$ value multiplied by a factor $10^{-2}$ in the log-space. Once the optimal combination of both tuning parameters are chosen, we re-estimate the model on the full data set using the optimal tuning parameters. We then use the estimated coefficients to predict the outcome at time $T_0+1$. We repeat the process by using expanding window in time dimension until all out-of-sample outcome data points are exhausted. 

Lastly, we note that the cross-validation procedure is implemented in the open-source R package \texttt{midasml} function \texttt{cv.panel.sglfit}. The same procedures where used in the articles \cite{babii2022machinepanel, babii2022panelpe} — see both articles for further details.

\section{Comparison with a Mixed Frequency  Vector Autoregressive model \label{appsec:MFVAR}}

%% INTRO
Mixed-Frequency Vector AutoRegressive (MF-VAR) models have become valuable tools in the nowcasting framework, enabling the joint modelling of multiple time series observed at different frequencies while capturing the dynamic interdependencies among macroeconomic variables. Beyond improving nowcasting accuracy, MF-VAR models also support structural analysis through the interpretation of estimated coefficients, offering insights into the transmission mechanisms of economic shocks.
%% SOME LITERATURE
In a single-region context, two prominent contributions are \cite{schorfheide2015real}, who develop a Bayesian MF-VAR framework to forecast the U.S.\ business cycle using a mix of quarterly and monthly indicators, and \cite{chan2023high}, who extend this approach by incorporating weekly data. 
This standard setting has been further expanded by \cite{koop2020regional}, who apply the MF-VAR model to nowcast regional economic activity in the United Kingdom, and by \cite{koop2025monthly}, who focus on producing monthly growth estimates for individual U.S.\ states.

%% OUR PAPER
We include the MF-VAR as an additional benchmark in our analysis. Similar to \cite{schorfheide2015real} and \cite{koop2020regional}, we address the frequency mismatch between quarterly and monthly variables using the triangular decomposition method discussed in \cite{mariano2003new, mariano2010coincident}.
%% DIMENSIONALITY
Given the inclusion of over 50 variables for each of the EA-19 countries and the relatively short time sample, the dataset is high-dimensional. Estimating the MF-VAR in this context poses challenges related to multicollinearity and the risk of overfitting.
%% VARIABLES
Similar to \cite{schorfheide2015real}, we model GDP growth rates using a selected subset of monthly and quarterly indicators.\footnote{The MF-VAR is estimated using the following variables: GDP growth; Industrial Production (Manufacturing); Harmonised Index of Consumer Prices; Long-term Government Bond Yield; Unemployment Rate; Trade Balance; and Real Effective Exchange Rate.} 
We employ a Minnesota prior and incorporate stochastic volatility, conducting inference based on 10,000 replications.
%% framework
The real-time nowcasting framework mirrors that of the main paper. We evaluate the performance of the MF-VAR by benchmarking it against an EA-19 model that relies solely on aggregate data.

%% RESULTS
Table \ref{tab:mfvar} reports the performance of two different specifications of the MF-VAR.
In Panel A, we fit the MF-VAR only on the EA-19 aggregate time series, while, in Panel B, we fit the MF-VAR on the individual country data and obtain the aggregate EA-19 forecast by using the weights $W^{(1)}-W^{(4)}$ defined in the main paper.
%% PAnel A
Looking at Panel A, we observe that the MF-VAR on aggregated data improves the nowcasting performance against the benchmark in all cases expect for the EoQ in the pre-COVID period.
%% PAnel B
Moving to Panel B, a similar pattern observed in the main paper with $W^{(3)}$ and $W^{(4)}$ achieving a better performance that the other weighting schemes. 
%% A & B
The MF-VAR on EA-19 aggregated data (Panel A) performs better than all specifications using country-level data (Panel B) in the Full sample, while we observe the opposite in the Pre-COVID period. 
%This result is in contrast to what we find in the main paper where we claim that country-level information is more relevant in the pandemic period.

%% COMPARISON with ML MIDAS (Tab4)
All in all, the MF-VAR appears to provide reliable nowcasting performance, with gains that are generally robust across both the pre-COVID and full samples. 
While its effectiveness is confirmed when compared to the forecast combination results in Table 4 of the main paper, the linear combination panel MIDAS models generally outperform the MF-VAR across most horizons. The only exception is the 2-month-ahead nowcast in the full sample, where the MF-VAR yields a more accurate forecast. These findings highlight the MF-VAR’s competitiveness, particularly in specific forecasting windows, while also underscoring the advantages of combining forecasts in a panel MIDAS framework.

%% TAB
\begin{table}[]
	\centering
	\begin{tabular}{rccc c ccc}
        & \multicolumn{3}{c}{\textit{Full-sample}} && \multicolumn{3}{c}{\textit{Pre-COVID}} \\ 
        \cmidrule(lr){2-4} \cmidrule(lr){6-8}
        & 2-month & 1-month & EoQ && 2-month & 1-month & EoQ \\
        \hline	
        %&&&&&&& \\
        EA-19 & 4.960 & 3.642 & 3.366 && 0.185 & 0.215 & 0.176 \\
        %&&&&&&& \\
		%& \multicolumn{6}{c}{\it Forecast combination of ML-MIDAS specifications (Table 3)}\\
		% &\textcolor{black}{ 0.693} & \textcolor{black}{0.815} & \textcolor{black}{0.717} && \textcolor{black}{0.831} & \textcolor{black}{0.770} & \textcolor{black}{0.733} \\
        %&&&&&&& \\
        & \multicolumn{6}{c}{Panel A. \it MF-VAR on EA-19 aggregated data}\\
        EA-19 & 0.656 & 0.893 & 0.977 && 0.948 & 0.860 & 1.077  \\ 
        %&&&&&&& \\
        & \multicolumn{6}{c}{Panel B. \it MF-VAR on individual country data}\\
        $W^{(1)}$ & 0.705 & 0.963 & 1.052 && 2.054 & 1.799 & 2.393 \\ 
        $W^{(2)}$ & 0.707 & 0.966 & 1.055 && 2.293 & 2.004 & 2.855 \\ 
        $W^{(3)}$ & 0.698 & 0.952 & 1.037 && 0.860 & 0.802 & 0.878 \\ 
        $W^{(4)}$ & 0.699 & 0.952 & 1.033 && 0.852 & 0.788 & 0.848 \\ 
        \hline
        \end{tabular}
	\medskip
	\caption{\small Nowcast comparison table --- root-mean-squared errors (RMSEs). 
		Nowcast horizons are 2- and 1-month ahead, as well as the end-of-the-quarter (EoQ). 
		The first row reports the absolute RMSE for the EA-19 model using only aggregate data. All remaining rows are relative RMSEs vis-\`a-vis the first row: :  values smaller than unity indicate an improvement in prediction with respect to the benchmark.
        Panel A displays results for MF-VAR on EU-19 aggregated data. 
        Panel B displays results for MF-VAR on individual country data and then aggregated at the Euro area level using weight $W^{(1)}$-$W^{(4)}$.
        Out-of-sample periods: 2016 Q1 - 2019 Q4 (pre-COVID) and 2016 Q1 - 2022 Q4 (Full sample).
		\label{tab:mfvar}} 
\end{table}

% \newpage
% Citations to be added:
% \begin{itemize}
%     \item \cite{boss2025nowcasting}
%     \item \cite{chan2023high, koop2020regional, koop2025monthly, schorfheide2015real}
% \end{itemize}

\newpage
\section{Variable Importance and Country-Level Evaluation \label{appsec:varimport}}

This section further investigates the nowcasting performance discussed in the main paper. It begins by exploring which variables drive the nowcast, and then evaluates the model’s performance at the country level to assess the added value of panel models without aggregating to the Euro area level.

\paragraph{Variable importance:}
Figure \ref{fig:sparsity_plots} displays the sparsity pattern resulting from the variable selection performed by the \textit{Pooled} panel model in the out-of-sample exercise. A colored tile indicates that the variable is selected by the sparse group penalty; a blank tile indicates otherwise. Overall, the selection pattern appears fairly persistent over time.

\begin{figure}
    \centering
    \includegraphics[scale=0.7]{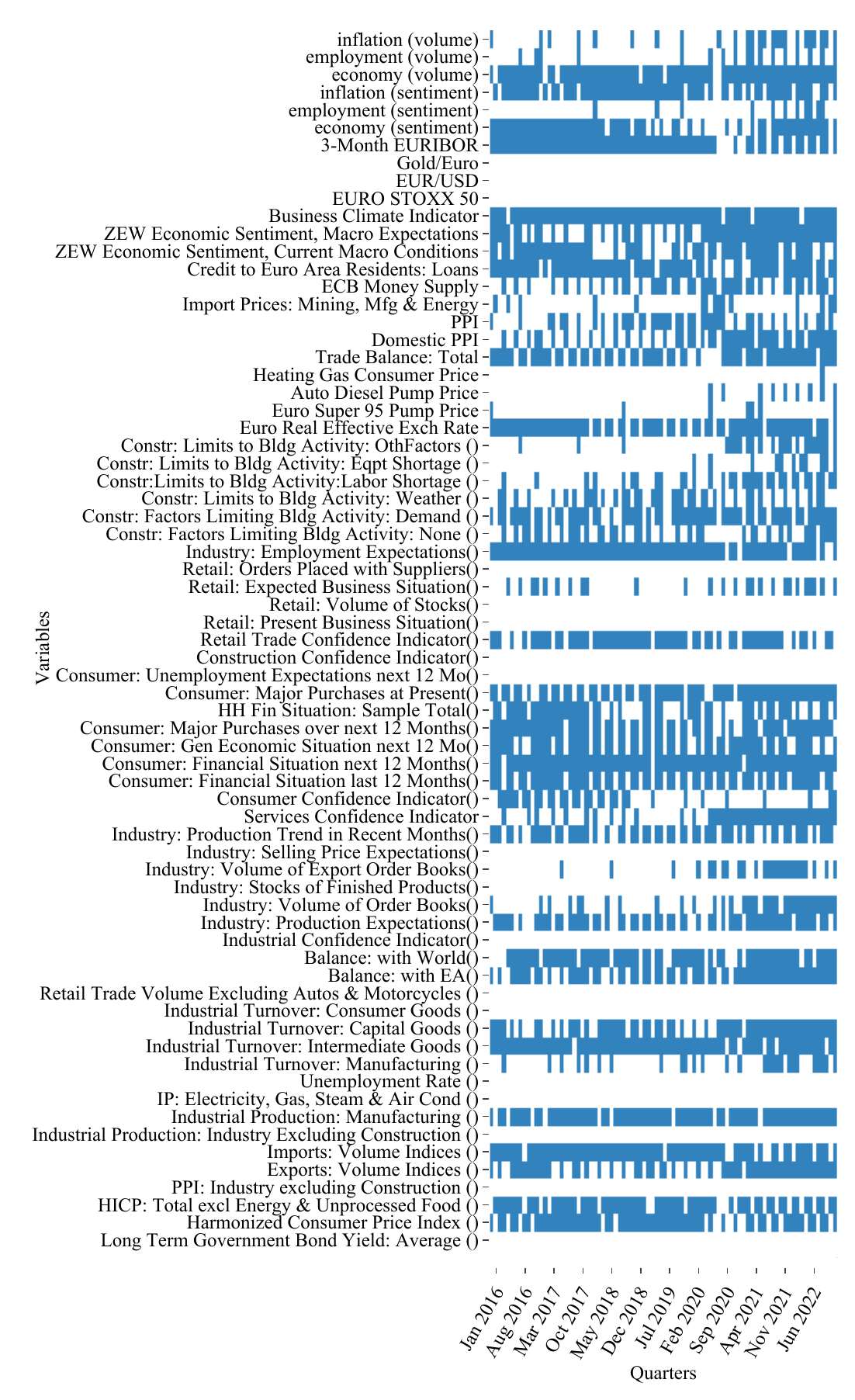}
    \caption{Sparsity plots of the variables selected by the \textit{Pooled} panel model in each month of the out-of-sample nowcasting exercise. A colored tile indicates that the variable is selected as an important predictor in that month.}
    \label{fig:sparsity_plots}
\end{figure}

Several noteworthy patterns emerge.  
First, news-based variables are frequently selected, whereas other high-frequency financial indicators—such as exchange rates and stock indices—are never chosen.  
Second, survey data generally provide useful signals. Among these, consumer-related surveys tend to be, on average, more informative than surveys on other topics (e.g., retail, construction, industry).  
Third, inflation-related variables (both news-based and HICP) are often selected. In contrast, energy- and fuel-related variables (e.g., Import Prices: Mining, Mfg \& Energy, gasoline, gas) are rarely picked, possibly because their information is already captured by HICP dynamics.  
Finally,  it is interesting to notice that labor-related variables are seldom selected (e.g., news-based employment indicators, unemployment rate).

Overall, these results are consistent with the main findings in the literature.  
First, we confirm the importance of soft data (e.g., surveys) in nowcasting Euro area economic developments, in line with \cite{cascaldigarcia2021}, who highlight that their timeliness can offset the long delays in the publication of hard data.  
Second, we emphasize the relevance of news-based variables in enhancing the predictive performance of nowcasting models tailored to the European context \cite{ashwin2021nowcasting, barbaglia2022forecastingEA}.  
Finally, our findings corroborate the relatively greater importance of inflation-related variables compared to unemployment indicators, consistent with \cite{boss2025nowcasting}.

\paragraph{Country-level performance:}
In Table \ref{tab:empirics_countrylevel_pooled}, we summarize the nowcasting performance at the country level. The table reports the RMSEs of the \textit{Pooled} panel model relative to those obtained from country-specific regressions. This comparison allows us to directly evaluate the contribution of the panel structure to nowcasting accuracy, independently of the weighting schemes used in the main analysis.\footnote{To conserve space, we present results only for the \textit{Pooled} panel model. Comparable findings are observed for the \textit{HetAR} panel model. Additional results are available upon request.}

Looking at the full sample (columns 2 to 4 of Table \ref{tab:empirics_countrylevel_pooled}), we observe that the panel model generally outperforms country-specific regressions. In over three-quarters of the cases, the relative RMSE falls below one, indicating superior nowcasting accuracy. On average, the panel model achieves an improvement of more than 10\% in forecasting performance. Notable gains are observed in large and medium-sized countries such as France, Italy, Spain, and Portugal.
However, the outlook changes when focusing on the pre-COVID sample (columns 5 to 7 of Table \ref{tab:empirics_countrylevel_pooled}). In this subsample, the panel model performs better in only about 40\% of the cases, with particularly weaker results in countries like France and Spain.

While this additional analysis provides valuable insights into the performance of panel models at the individual country level, it does not alter the main conclusions of the paper. On average, panel models deliver more accurate nowcasts when evaluated over the full sample, compared to the pre-COVID subsample. This finding holds both at the country level and when aggregating to the Euro area, suggesting that the influence of the weighting schemes used in the main analysis is only marginal.

\begin{table}[ht]
\centering
\begin{tabular}{c ccc c ccc}
  \hline
country & 2-month & 1-month & EoQ && 2-month & 1-month & EoQ \\ 

		& \multicolumn{3}{c}{\it Full sample} &&\multicolumn{3}{c}{\it Pre-COVID}\\[0.2em]

  \hline
AT & 1.03 & 0.82 & 0.77 && 0.80 & 1.75 & 1.65 \\ 
  BE & 0.98 & 0.93 & 0.96 && 1.90 & 1.73 & 2.03 \\ 
  CY & 1.07 & 0.70 & 0.90 && 0.79 & 1.45 & 1.13 \\ 
  DE & 1.16 & 1.06 & 0.92 && 0.84 & 1.36 & 1.33 \\ 
  EE & 0.90 & 0.98 & 1.17 && 1.04 & 0.89 & 0.81 \\ 
  ES & 0.76 & 0.51 & 0.50 && 1.64 & 3.77 & 2.06 \\ 
  FI & 0.88 & 0.94 & 1.89 && 1.07 & 1.45 & 1.25 \\ 
  FR & 0.68 & 0.62 & 0.74 && 1.54 & 3.13 & 2.50 \\ 
  GR & 0.91 & 1.01 & 1.03 && 1.17 & 0.96 & 1.23 \\ 
  IE & 0.79 & 0.95 & 0.97 && 0.69 & 0.95 & 0.95 \\ 
  IT & 0.89 & 0.64 & 0.76 && 1.78 & 1.85 & 1.92 \\ 
  LT & 0.70 & 1.09 & 1.04 && 0.69 & 0.88 & 0.64 \\ 
  LU & 1.10 & 1.03 & 0.80 && 0.77 & 0.91 & 1.14 \\ 
  LV & 0.92 & 1.05 & 0.75 && 0.85 & 0.92 & 1.47 \\ 
  MT & 0.88 & 0.91 & 0.92 && 0.99 & 1.09 & 1.31 \\ 
  NL & 0.97 & 0.90 & 0.98 && 1.02 & 1.65 & 1.32 \\ 
  PT & 0.91 & 0.85 & 0.65 && 1.28 & 0.79 & 0.93 \\ 
  SI & 0.80 & 0.90 & 0.72 && 0.76 & 0.76 & 0.54 \\ 
  SK & 1.04 & 0.69 & 0.64 && 0.91 & 0.78 & 1.57 \\ 
   \hline
\end{tabular}

	\medskip
	\caption{\small Nowcast comparison table by country --- root-mean-squared errors (RMSEs) of the \textit{Pooled} panel model relative to \textit{country-specific regressions}. 
    Nowcast horizons are 2- and 1-month ahead, as well as the end-of-the-quarter (EoQ).  
    We use 5-fold cross-validation adjusted for panel data to compute both sg-LASSO tuning parameters.
    Out-of-sample periods: 2016 Q1 - 2019 Q4 (pre-COVID) and 2016 Q1 - 2022 Q4 (Full sample).
		\label{tab:empirics_countrylevel_pooled}} 

\end{table}

\newpage

\section{Further Digression on Weights \label{appsec:weights}}

To further investigate the nowcasting performance across weighting schemes, we turn to Figure \ref{fig:weights_W3W4} where we compare the weights obtained by  $W^{(4)}$ (i.e., projections on GDP) and by $W^{(3)}$ (i.e., proportion of GDP level).
We take the difference between  $W^{(4)}$ and  $W^{(3)}$ weights: positive values indicate a larger weight given by $W^{(4)}$ with respect to $W^{(3)}$.
Note that by construction the $W^{(3)}$ weights are a direct measure of the size of each national economy within the Euro area. 
Compared to $W^{(3)}$, we observe that $W^{(4)}$ assigns smaller weights to Germany and to a lesser extent the Netherlands, while it inflates the relative importance of some small- and medium-sized countries, namely Austria, Belgium and Luxembourg.
Although the  size of these economies within the Euro area is relatively small,  information about economic developments in those countries play a relevant role in attaining more accurate nowcasts.

\begin{figure}[]
	\centering
	\makebox[\textwidth][c]{\includegraphics[scale=0.6]{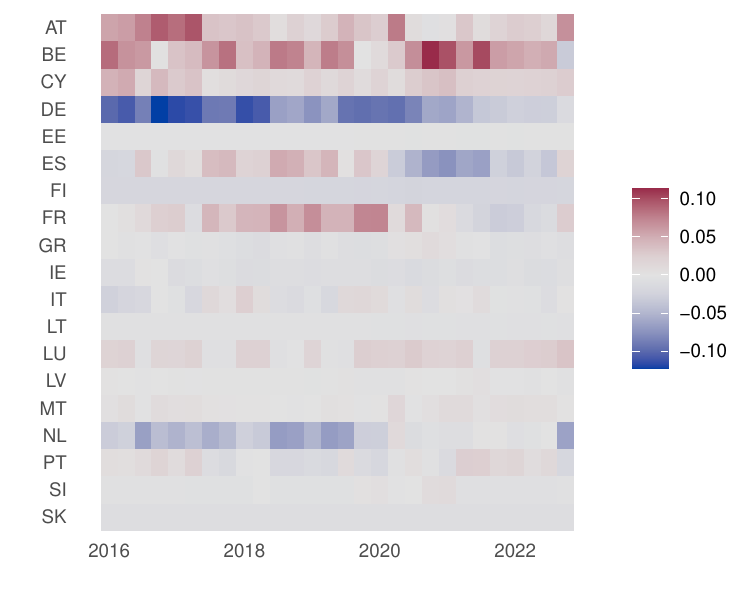}}%
	\caption{Difference between $W^{(4)}$ and $W^{(3)}$ weights by country. 
		If difference is larger (smaller) than zero, $W^{(4)}$ weights are larger (smaller) than $W^{(3)}$ ones and are reported in red (blue).      
		Each tile corresponds to a quarter. 
		The darker the color, the larger the difference between weighting schemes.}
	\label{fig:weights_W3W4}
\end{figure}
\clearpage

\end{document}